\documentclass[
reprint,
amsmath,amssymb,
aps,
pra,
]{revtex4-2}

\usepackage{graphicx}
\usepackage{subfigure}
\usepackage{dcolumn}
\usepackage{bm}
\usepackage[ruled,vlined]{algorithm2e}
\usepackage{amsthm}
\newtheorem{theorem}{\indent Theorem}
\newtheorem{lemma}[theorem]{\indent Lemma}
\newtheorem{proposition}[theorem]{\indent Proposition}
\newtheorem{restate}{\indent Theorem}

\usepackage{hyperref}

\begin{document}

\preprint{APS/123-QED}

\title{Expressive Power and Limitations of Multi-photon Quantum Neural Networks}

\author{Zeyu Xiao}
\author{Weixu Shi}
\author{Yizhi Wang}\email{yizhiwang@nudt.edu.cn}
\author{Lingling Lao}
\author{Junjie Wu}\email{junjiewu@nudt.edu.cn}
\affiliation{College of Computer Science and Technology, National University of Defence Technology, Changsha 410073, China}

\date{\today}

\begin{abstract}
Quantum neural networks (QNNs) have shown promise in leveraging quantum computation for machine learning tasks. Utilizing multiple identical photons as input, multi-photon quantum neural networks (MPQNNs) have the potential to enhance the expressivity through increasing the photon number. However, how precisely the expressivity of an MPQNN is affected by an increase in photon number, and whether it can be infinitely enhanced by increasing the photon number, remains unexplored. In this work, we quantitatively estimate the expressivity of this model by deriving upper bounds on approximation error in two cases. In the case of a fixed observable, there exists a threshold that scales linearly with the mode number. Below the threshold, the expressivity of an MPQNN can be enhanced polynomially by increasing the photon number. Above the threshold, however, increasing the photon number does not affect the expressivity. In the case of a trainable observable,  the expressivity can always be enhanced polynomially by increasing the photon number. These findings are then validated by numerical simulations. Our work elucidates the performance enhancement of multi-photon quantum feature in QNNs, as well as its limitations, offering guidance for leveraging multi-photon advantages in quantum machine learning.
\end{abstract}

\maketitle

\section{Introduction}

Machine learning has achieved notable progress in classical computation, motivating efforts to explore quantum advantages in this field \cite{Cerezo2022}. Quantum neural networks (QNNs), as quantum counterparts of artificial neural networks, are considered promising candidates for integrating quantum computation with machine learning \cite{Abbas2021,Bharti2022,Cerezo2021}. It has been proved that uploading data multiple times in QNN enables universal approximation to a broad class of functions, the kind of which is called data re-uploading quantum neural networks (DRQNN)~\cite{Perez-Salinas2020,Perez-Salinas2021,Perez-Salinas2025,Schuld2021,Yu2022,Yu2024}. 
Despite its universal approximation property, the question of how quantum features enhance the neural network remains largely unexplored. 

Meanwhile, linear optical networks with multi-photon input have been shown quantum advantage both theoretically and experimentally in computational tasks, known as boson sampling \cite{Aaronson2011,Harrow2017,Zhong2020,Zhong2021,Madsen2022,Liu2026}. This have highlighted the potential of Fock states for quantum computation. Building on this insight, DRQNNs are generalized to multi-photon QNNs (MPQNNs) by extending the state space to the multi-photon Fock space, with numerical evidence showing that increasing the photon number enhances expressivity~\cite{Gan2022}. The DRQNN on a photonic integrated processor has been experimentally implemented \cite{Mauser2025}, demonstrating learning capability, but leaving the enhancement from multiple photons unexplored. It has been shown that the data consumption in learning stage can be reduced by multiple photons~\cite{Wang2026}. However, they did not adopt the data re-uploading scheme, which restricts the expressivity of the model.

MPQNN serves as a general model for the above research, but a thorough study on its expressivity is still missing. In this paper we tackle this problem by study the universality approximation of MPQNN. 
In \ref{sec:model}, we present the framework of MPQNN. 
In \ref{sec:description}, we explicitly derive the hypothesis space of MPQNNs, which serves as a crucial step to quantitatively characterize the expressivity. Then in  \ref{sec:fixed} and \ref{sec:trainable}, we establish approximation error bounds from the hypothesis space to a general class of functions. The error bound is related to the photon number $n$, the mode number $m$ and the layer number $L$. 
The error bounds take distinct forms depending on whether the observable is trainable or not, and leads to contrasting conclusions about the multiphoton-induced enhancement. 
In the trainable case, the expressivity of an MPQNN can always be enhanced by increasing the photon number, without altering the structure of the linear optical network. 
In the fixed case, surprisingly, once the photon number $n$ exceeds a threshold $m-2$, the additional photons cannot enhance further the expressivity of an MPQNN, complementing the findings in \cite{Wang2026}. The theoretical results are supported by numerical simulations in \ref{sec:experiment}. 
Our work provides a quantitative characterization of how the photon number can or cannot enhance the expressivity of quantum neural networks, offering a reference for the implementation of QNNs on photonic platforms.

\section{Model of MPQNNs}
\label{sec:model}

\begin{figure*}[t]
    \centering
    \includegraphics[width=0.9\linewidth]{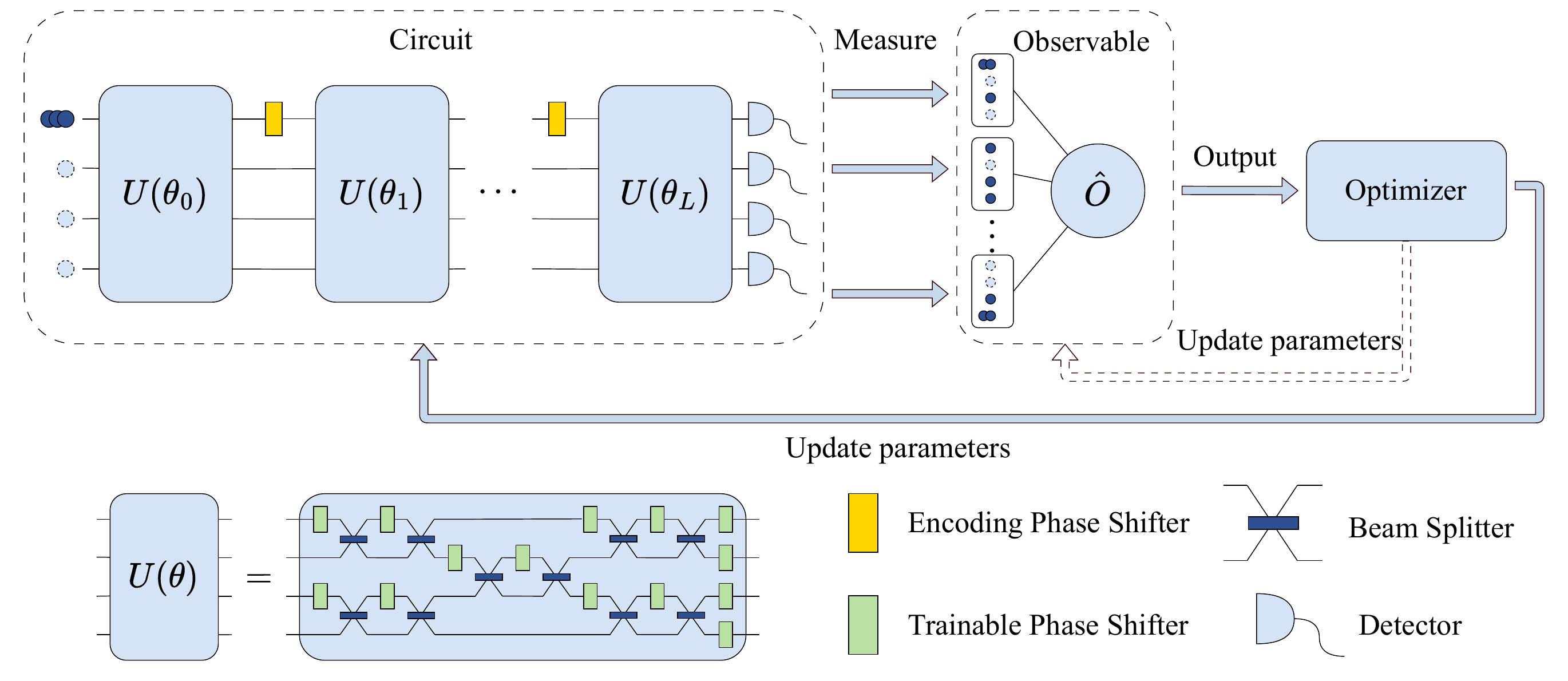}
    \caption{The model of MPQNNs. The model can be divided into the circuit part and the observable part. Phase shifters that serve as encoding blocks are depicted as yellow rectangles, while phase shifters with trainable parameters are depicted as green rectangles. The trainable blocks, depicted as blue rectangles, are linear optical networks with universal unitarity. After the photons pass the linear optical network, an observable is measured to obtain the output function. The measured observable can be fixed or trainable.}
    \label{fig:model}
\end{figure*}
Here we first revisit the fundamental definition of DRQNNs, then focus on their linear-optical-network implementation, and derive the mathematical description of the outputs of MPQNNs. The expressivity of an MPQNN is formulated as the universal approximation property of its output.

The different nature underlying quantum neural networks has led to a different way of data encoding. In classical neural networks, during the feed-forward process, each neuron has access to a copy of the outputs from all neurons in the preceding layer. However, data cannot be copied in quantum computers because of the no-cloning theorem. To overcome this limitation, DRQNNs take the approach of uploading data multiple times. Each of the $L$ layers of the quantum circuit is divided into the encoding block $U_E(x)$ and the trainable block $U_T(\theta_i)$, where $x$ denotes the input data and $\bm{\theta}=(\theta_0,\cdots,\theta_L)$ denotes the trainable parameters. The circuit can be represented as
\begin{eqnarray}
    U(x,\bm{\theta})=U_T(\theta_L)U_E(x)\cdots U_T(\theta_1)U_E(x)U_T(\theta_0).
\end{eqnarray}
An ordinary QNN can be treated as a special case of a DRQNN when $L=1$, as in \cite{Wang2023}.

The model of MPQNNs, as illustrated in Fig. \ref{fig:model}, can be divided into the circuit part and the observable part. In the circuit part, the quantum circuit implemented by the linear optical network possesses a data re-uploading structure. For the encoding blocks, with no loss of generality, we use phase shifters fixed in the first mode of the linear optical network, the unitary matrix of which is
\begin{eqnarray}
    U_{PS}(x)=\begin{pmatrix}
        e^{ix}&&&\\
        &1&&\\
        &&\ddots&\\
        &&&1
    \end{pmatrix}.
\end{eqnarray}
In this work, we focus on the task of univariate function approximation. For cases involving higher-dimensional input data, the individual components of the data can be encoded onto phase shifters on different modes, which is inherently scalable. The trainable blocks $U(\theta)$ are expected to be as universal as possible. Certain structures of the linear optical network constructed with beam splitters and phase shifters are known to have the flexibility to implement any unitary matrix by adjusting the parameters\cite{Reck1994,Clements2016,Guise2018}. For simplicity, we drop the trainable parameters $\theta_l$ from the expression and denote the unitary matrix of the trainable block in the $l$-th layer as $U_l$. The unitary matrix of a DRQNN on linear optics platform is
\begin{eqnarray}
    U(x)=U_LU_{PS}(x)\cdots U_1U_{PS}(x)U_0.
\end{eqnarray}

When we consider $n$ identical photons instead of one single photon, we use the (bosonic) Fock state to describe the state of photons. Denote the set of all possible photon partitions by
\begin{eqnarray}
    \Phi_{m,n}:=\{(s_1,\cdots,s_m):s_i\in\mathbb{N},\sum_{i=1}^ms_i=n\}.
\end{eqnarray}
The set $\Phi_{m,n}$ defines a basis of the Fock space, the dimension of which is $|\Phi_{m,n}|={n+m-1\choose m-1}$. Suppose that the input state is $\lvert\bm{s}\rangle=\lvert n,0,\cdots,0\rangle$, which means all $n$ photons are input on the first mode. A linear optical network, represented by its unitary matrix $U$, induces a unitary transformation $\hat{U}$ in the Fock space,
\begin{eqnarray}
    \hat{U}\lvert\bm{s}\rangle=\frac1{\sqrt{n!}}\left(\sum_{j=1}^mu_{j,1}\hat{a}_j^\dagger\right)^n\lvert\mathrm{vac}\rangle,
\end{eqnarray}
where $\hat{a}_k$ and $\hat{a}_k^\dagger$ are the annihilation and creation operators in mode $k$, and $\lvert\mathrm{vac}\rangle=\lvert0,\cdots,0\rangle$ is the vacuum state.

After passing through the linear optical network, the photons are measured, and the results are post-processed to obtain the observable expectation. The detectors obtain multiple distinct photon number partitions, which are granted different weights to realize the measurement of the expectation value of an observable
\begin{eqnarray}
    \hat{O}=\sum_{\bm{t}\in\Phi_{m,n}}o_{\bm{t}}\lvert\bm{t}\rangle\langle\bm{t}\rvert,
\end{eqnarray}
where $o_{\bm{t}}\in\mathbb{R}$ is the weight of the photon number partition $\bm{t}$. The output of an MPQNN can be represented as
\begin{eqnarray}
    h(x)=\langle\bm{s}\rvert\hat{U}^\dagger(x)\hat{O}\hat{U}(x)\lvert\bm{s}\rangle.
\end{eqnarray}
The value of the output function is fed into an optimizer. The optimizer can employ classical algorithms such as gradient descent to optimize a certain loss function, and update the trainable parameters of the linear optical network with optimized ones.

The weights in the measured observable $O_{\bm{t}}$ can be either fixed or trainable. In the fixed case, the weights in the observable are predetermined and remain unchanged during the training process. In the trainable case, however, the weights in the observable are optimized by the classical optimizer together with the trainable parameters in the linear optical network, and are updated during the training process. Intuitively, it would benefit for the expressivity of the MPQNN if the weights are allowed to be trainable rather than fixed. We separately analyzed the expressivity of the MPQNN under these two cases, thereby verifying this intuition.

The expressivity of an MPQNN can be characterized by the class of function that can be approximated by it. 
In an approximation task, the objective is to bring the output function of the MPQNN as close as possible to a target function by training the trainable parameters within the MPQNN. Let $\mathcal{H}$ be the family of all possible output functions of an MPQNN, referred to as the hypothesis space. For an arbitrary target function $f$, the approximation error is defined as $\inf_{h\in\mathcal{H}}\Vert f-h\Vert$, which is the distance from $f$ to the set $\mathcal{H}$. Here, $\Vert\cdot\Vert$ can be chosen as an appropriate norm. Our goal in the following sections is to establish an upper bound on the approximation error, given the photon number $n$, the mode number $m$, and the layer number $L$.

\section{Hypothesis Space of MPQNNs}
\label{sec:description}

Real-valued trigonometric polynomials plays an important role in most universal approximation theorems of DRQNNs\cite{Schuld2021, Yu2022,Neufeld2025,Tang2026}. The output function of a DRQNN is a trigonometric polynomial $p\in\mathbb{C}[e^{ix},e^{-ix}]$ with bounded degree, namely
\begin{eqnarray}
    p(x)=\sum_{\omega=-N}^Nc_\omega e^{i\omega x}.
\end{eqnarray}
As analyzed in \cite{Schuld2021}, the coefficients $c_\omega$ are determined by the trainable blocks within the DRQNN, while the spectrum is determined by the encoding blocks. Furthermore, the output function of a DRQNN should be real-valued, which implies that $c_{-\omega}=c_\omega^*$. Define $\mathcal{P}_N$ as the family of all real-valued trigonometric polynomials of degree $\le N$,
\begin{eqnarray}
    &&\mathcal{P}_N:=\{p\in\mathbb{C}[e^{ix},e^{-ix}]:\\\nonumber
    &&\deg(p)\le N;\forall x\in\mathbb{R},p(x)\in\mathbb{R}\}.
\end{eqnarray}
According to \cite{Schuld2021}, the hypothesis space of a DRQNN with $L$ layers is a subset of $\mathcal{P}_L$. When $L\to\infty$, the DRQNN possesses the universal approximation property.

The MPQNN is a natural generalization of the DRQNN to the Fock space, where the linear optical network has the same structure as the quantum circuit in the DRQNN, while the input state is a Fock state. Therefore, it is not difficult to understand that the hypothesis space of an MPQNN is a family of functions obtained by generalizing $\mathcal{P}_L$. When the photon number $n=1$, which is the single-photon case, the hypothesis space should be $\mathcal{P}_L$. When $n>1$, however, the multi-photon input increases the dimension of the Hilbert space, thereby enabling the MPQNN to express more complex functions. The following theorem characterizes all possible output functions of MPQNNs.

\begin{theorem}\label{thm:multi-photon}
    There exists an MPQNN whose output function is $h(x)$ if and only if there exists a real polynomial $g\in\mathbb{R}[x_1,\cdots,x_m]$ of total degree $\le n$ and trigonometric polynomials $y_1,\cdots,y_m\in\mathcal{P}_L$ such that $h=\pi_g(y_1,\cdots,y_m)$, where $\pi_g$ is defined as
    \begin{eqnarray}
        \pi_g(y_1,\cdots,y_m)(x):=g(y_1(x),\cdots,y_m(x)),
    \end{eqnarray}
    and $y_1,\cdots,y_m$ satisfy
    \begin{eqnarray}\label{eq:multi-photon}
        &y_j(x)\ge 0, &\forall j\in[m],\forall x\in\mathbb{R},\\
        &\sum_{j=1}^my_j(x)=1, &\forall x\in\mathbb{R}.\nonumber
    \end{eqnarray}
\end{theorem}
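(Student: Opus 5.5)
The plan is to compute the output function explicitly and reduce both directions of the equivalence to two facts. The first is a homogenization argument on the polynomial side. The second is a factorization (``degree-reduction'') argument for unit-norm vectors of trigonometric polynomials on the circuit side.

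\emph{Step 1 (form of the output).} Since $U_{PS}(x)=\mathrm{diag}(e^{ix},1,\dots,1)$, every entry of $U(x)=U_LU_{PS}(x)\cdots U_1U_{PS}(x)U_0$ is an analytic trigonometric polynomial in $z=e^{ix}$ of degree $\le L$. Write $u(x)=(u_{1,1}(x),\dots,u_{m,1}(x))$ for its first column, a unit vector for every $x$. Expand $\frac{1}{\sqrt{n!}}\bigl(\sum_j u_{j,1}\hat a_j^\dagger\bigr)^n\lvert\mathrm{vac}\rangle$ multinomially. This gives $\lvert\langle\bm t\rvert\hat U(x)\lvert\bm s\rangle\rvert^2=\frac{n!}{\prod_j t_j!}\prod_j y_j(x)^{t_j}$ with $y_j=\lvert u_{j,1}\rvert^2$. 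Hence
\begin{eqnarray}
h(x)=\sum_{\bm t\in\Phi_{m,n}}o_{\bm t}\frac{n!}{\prod_j t_j!}\prod_{j=1}^m y_j(x)^{t_j}.
\end{eqnarray}
Each $y_j$ has frequencies in $[-L,L]$, is real and nonnegative, and $\sum_j y_j=\lVert u\rVert^2=1$. So $y_j\in\mathcal P_L$ satisfy the constraints, and $h=\pi_g(y)$ with $g$ a real homogeneous polynomial of degree $n$. This proves the ``only if'' direction.

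\emph{Step 2 (polynomial side of ``if'').} Given $g$ of total degree $\le n$, replace each monomial of degree $k<n$ by that monomial times $(x_1+\cdots+x_m)^{n-k}$. This produces a homogeneous $\tilde g$ of degree $n$ with $\pi_{\tilde g}(y)=\pi_g(y)$ whenever $\sum_j y_j=1$. Writing $\tilde g=\sum_{\bm t}\gamma_{\bm t}x^{\bm t}$, set $o_{\bm t}=\gamma_{\bm t}\prod_j t_j!/n!\in\mathbb R$. By Step 1 it then suffices to realize the prescribed $y_j$ as $\lvert u_{j,1}\rvert^2$ for some circuit.

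\emph{Step 3 (circuit side of ``if''; main obstacle).} By the Fejér--Riesz theorem, each nonnegative $y_j\in\mathcal P_L$ equals $\lvert q_j(z)\rvert^2$ for an analytic polynomial $q_j$ of degree $\le L$. Then $q=(q_1,\dots,q_m)=\sum_{k=0}^L c_kz^k$, with $c_k\in\mathbb C^m$, satisfies $\lVert q(z)\rVert=1$ on $\lvert z\rvert=1$. I would show by induction on $L$ that any such $q$ equals $U_LU_{PS}(x)\cdots U_{PS}(x)U_0e_1$.
\begin{itemize}
\item The base case $L=0$ is immediate: $q=c_0$ is a constant unit vector, so take any unitary $U_0$ with first column $c_0$.
\item For the inductive step, if $L\ge1$ the $z^L$ coefficient of $\lVert q\rVert^2\equiv1$ gives $\langle c_0,c_L\rangle=0$ (and $c_L=0$ is trivially fine).
\item Since $m\ge2$, choose a unitary $V$ with $Vc_L\in\mathrm{span}(e_1)$ and $Vc_0\perp e_1$.
\item Then components $2,\dots,m$ of $Vq$ have degree $\le L-1$. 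The first component has zero constant term, so $U_{PS}(x)^{-1}Vq$ is a unit-norm polynomial vector of degree $\le L-1$.
\item Apply the induction hypothesis to it and set $U_L=V^\dagger$.
\end{itemize}
This factorization is the delicate step. One must check that the orthogonality $c_0\perp c_L$ suffices for both degree conditions simultaneously, and that unit norm is preserved at each stage; both follow from unitarity of $V$ and of $U_{PS}$. Combining Steps 1--3 yields the ``if'' direction.
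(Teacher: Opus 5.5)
Your proposal is correct and follows essentially the same route as the paper. The paper also computes the output by multinomial expansion for the ``only if'' direction, homogenizes $g$ using $x_1+\cdots+x_m=1$ and applies Fej\'er--Riesz for the ``if'' direction, and realizes the amplitudes through an inductive lemma that rotates the top-degree coefficient into the first mode and divides out $e^{ix}$. Your single-unitary version of that last step, based on $\langle c_0,c_L\rangle=0$, is slightly tidier than the paper's Givens-rotation argument: it also handles a vanishing top coefficient (for instance a constant vector with $L\ge 1$), and it correctly takes $U_L=V^\dagger$, whereas the paper sets $U_L$ to the identity after having applied rotations.
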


The proof of the theorem is detailed in Appendix \ref{apx:proof1}. In the expression of the output function of an MPQNN, the trigonometric polynomials $y_1,\cdots,y_m$ are determined by the trainable parameters in the linear optical network. 
Denote the family of all trigonometric polynomials satisfying such conditions by
\begin{eqnarray}
    &&\Delta:=\{(y_1,\cdots,y_m)\in\mathcal{P}_L^m:\\\nonumber
    &&\forall j\in[m],\forall x\in\mathbb{R},y_j(x)\ge 0;\forall x\in\mathbb{R},\sum_{j=1}^my_j(x)=1\}.
\end{eqnarray}
The range of $\Delta$ reflects the degrees of freedom of the quantum circuit within an MPQNN. On the other hand, the real polynomial $g$ is determined by the observable being measured. Under the action of $g$, the trigonometric polynomials $y_1,\cdots,y_m$ is mapped to a trigonometric polynomial of higher degree, reflecting the potential enhancement of the expressivity of an MPQNN by multi-photon input.

In an MPQNN, the measured observable can be either fixed or trainable, which leads to two distinct cases for the hypothesis space. In the case of a fixed observable, the detector reads the photon number partitions from the linear optical network, and then averages them with fixed weights to obtain the output of the MPQNN. In this case, $g$ is a fixed real polynomial, and the family of all possible output function of the MPQNN is $\pi_g(\Delta)$. However, when $g$ is fixed, $\pi_g(\Delta)$ is a bounded set, whereas the range of the target function $f$ is unbounded. This renders the distance from $f$ to $\pi_g(\Delta)$ unbounded, thereby preventing any quantitative estimation of the expressivity of the MPQNN. Referring to the standard practice in \cite{Yu2022,Yu2024,Tang2026}, we allow a resacle coefficient between the target function and the output function of the MPQNN. This yields the hypothesis space of the MPQNN in the case of a fixed observable,
\begin{eqnarray}
    \mathcal{H}_g:=\{\alpha\pi_g(\bm{y}):\bm{y}\in\Delta;\alpha\in\mathbb{R}\}.
\end{eqnarray}
In the case of a trainable observable, the output of the MPQNN is obtained by averaging over photon number partitions with trainable weights. These weights, like the trainable parameters within the linear optical network, is updated during the training process. This is equivalent to cascading a trainable linear layer after the MPQNN with a fixed observable, thereby introducing $n+m-1\choose m-1$ additional classical trainable parameters. Allowing a subset of parameters to be trainable while keeping another subset fixed may render the model more flexible. We leave this for future investigation. The hypothesis space of the MPQNN in the case of a trainable observable is
\begin{eqnarray}
    &&\mathcal{H}:=\{\pi_g(\bm{y}):\\\nonumber
    &&\bm{y}\in\Delta;g\in\mathbb{R}[x_1,\cdots,x_m];\deg(g)\le n\}.
\end{eqnarray}
Here, owing to the arbitrariness of the polynomial $g$, we no longer need to introduce a rescale coefficient. It is obvious that $\mathcal{H}_g\subseteq\mathcal{H}$ for all $g$. This enables MPQNNs with a trainable observable to overcome certain limitations of MPQNNs with a fixed observable.

\section{Expressivity of MPQNNs with Fixed Observable}
\label{sec:fixed}

In the case of a fixed observable, the measured observable $\hat{O}$ in the MPQNN is predetermined and remains unchanged during the training process. According to Theorem \ref{thm:multi-photon}, the fixed observable $\hat{O}$ yields a fixed real polynomial $g\in\mathbb{R}[x_1,\cdots,x_m]$ of total degree $\le n$. The hypothesis space of the MPQNN is $\mathcal{H}_g$. We quantitatively study the expressivity of the MPQNN by estimating the approximation error, i.e., the distance from the target function to the hypothesis space. Using Jackson's inequality \cite{Lorentz1966}, we prove an upper bound on the approximation error.

\begin{theorem}\label{thm:fixed}
    There exists a real polynomial $g\in\mathbb{R}[x_1,\cdots,x_m]$ of total degree $\le n$ and a constant $C_K>0$ for $K\in\mathbb{N}^+$ such that if $f\in C_{2\pi}(\mathbb{R})$ is a $K$-times continuously differentiable $2\pi$-periodic function, then
    \begin{eqnarray}\label{eq:error_bound}
        \inf_{h\in\mathcal{H}_g}\Vert f-h\Vert_\infty\le\frac{C_K\Vert f^{(K)}\Vert_\infty}{(dL)^K},
    \end{eqnarray}
    where $d=\min\{n,m-2\}$, and $f^{(K)}$ is the $K$-th derivative of the function $f$.
\end{theorem}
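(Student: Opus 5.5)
The plan is to reduce the statement to Jackson's inequality by proving a containment of the form
\begin{eqnarray}\label{eq:plan_containment}
\mathcal{P}_{dL}\subseteq\mathcal{H}_g
\end{eqnarray}
for one suitably chosen $g$ of total degree $\le n$. Once this holds, Jackson's inequality \cite{Lorentz1966} gives, for $f\in C^K_{2\pi}$, a $p\in\mathcal{P}_{dL}$ with $\Vert f-p\Vert_\infty\le C_K\Vert f^{(K)}\Vert_\infty/(dL)^K$. Hence the infimum over $\mathcal{H}_g$ is no larger. Here $d=\min\{n,m-2\}$, and we assume $m\ge 3$ so that $d\ge1$. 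Throughout, Theorem \ref{thm:multi-photon} is used to translate between MPQNN outputs and $\pi_g(\bm{y})$ with $\bm{y}\in\Delta$.

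\textbf{Choice of $g$.} Set $\lambda:=(2d)^{-d}$ and $g(x_1,\dots,x_m):=x_1x_2\cdots x_d-\lambda$. This $g$ has total degree $d\le n$; the constant term is allowed by Theorem \ref{thm:multi-photon}, and it could also be homogenized using $\sum_j y_j=1$. Given $p\in\mathcal{P}_{dL}$, we aim to find $s>0$ and $\bm{y}\in\Delta$ with $\pi_g(\bm{y})=sp$. Then $p=s^{-1}\pi_g(\bm{y})\in\mathcal{H}_g$, which proves the containment above.

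\textbf{Factorization step.} Let $q:=sp+\lambda$. For $s$ small, $q$ is a strictly positive real trigonometric polynomial of degree $\le dL$. By the Fej\'er--Riesz / root-pairing argument, $e^{idLx}q(x)$ is a polynomial in $z=e^{ix}$ whose roots come in pairs $r,1/\bar r$ off the unit circle. This gives
\begin{eqnarray}
q(x)=c\prod_{k=1}^{dL}\left|e^{ix}-r_k\right|^2,\qquad c>0,
\end{eqnarray}
where each factor is a positive real trigonometric polynomial of degree $1$. Grouping the factors into $d$ blocks of $L$ factors each gives $q=\prod_{j=1}^d q_j$, with each $q_j\ge0$ and $q_j\in\mathcal{P}_L$; the constant $c$ is distributed as $c^{1/d}$ per block. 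Then set
\begin{eqnarray}
y_j:=q_j \ (j\le d),\qquad y_{d+1}:=1-\sum_{j\le d}q_j,\qquad y_j:=0 \ (j>d+1).
\end{eqnarray}
This uses $d+1\le m-1<m$ modes. Each $y_j$ lies in $\mathcal{P}_L$, the $y_j$ sum to $1$, and $\pi_g(\bm{y})=q-\lambda=sp$.

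\textbf{Main obstacle: nonnegativity of $y_{d+1}$.} The remaining requirement is $y_{d+1}\ge0$, i.e.\ $\sum_j q_j\le1$. A bound on the product $\prod_j q_j$ alone does not control the individual factors, so I would exploit the freedom in $s$. As $s\to0$, the degree-$2dL$ polynomial $z^{dL}q$ tends to $\lambda z^{dL}$. By continuity of roots, $dL$ roots tend to $0$ and $dL$ tend to $\infty$. After normalizing each degree-one factor by writing $|e^{ix}-r|^2=|r|^2|1-e^{ix}/r|^2$ for the large roots, every normalized factor converges uniformly to $1$. Consequently each balanced block $q_j$ converges uniformly to $\lambda^{1/d}=1/(2d)$. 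For $s$ small enough we then get $q_j\le 1/d$ pointwise, hence $\sum_j q_j\le 1$ and $y_{d+1}\ge0$. This completes the containment, and the bound follows from Jackson's inequality.

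The care needed is in making this root-continuity argument precise, since the choice of $s$ depends on $p$; that is harmless because the rescaling $\alpha$ in $\mathcal{H}_g$ is arbitrary. The argument in fact only uses $d+1\le m$, so it also works with $d=\min\{n,m-1\}$, which implies the stated bound.
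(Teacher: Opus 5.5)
Your proof is correct, and it reaches the key containment $\mathcal{P}_{dL}\subseteq\mathcal{H}_g$ by a genuinely different route from the paper.

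The paper argues locally. It takes an interior point $\bm{y}$ of $\Delta$: the uniform distribution perturbed by $\epsilon\cos(Lx)$, with mode $m$ as a balancing mode. It pairs this with a $g$ built from antiderivatives of Chebyshev polynomials in $x_1,\dots,x_{d+1}$, arranged so that $\pi_g(\bm{y})=\bm{0}$ and the Fr\'echet derivative maps the tangent space $T$ onto $\mathcal{P}_{dL}$. A nonlinear open-mapping theorem then gives a ball of $\mathcal{P}_{dL}$ inside $\pi_g(\Delta)$, and the rescaling $\alpha$ inflates this to the whole space.

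You instead choose $g=x_1\cdots x_d-\lambda$ and make the local surjectivity explicit. Each small perturbation $\lambda+sp$ is spectrally factored by Fej\'er--Riesz into $d$ nonnegative degree-$L$ factors uniformly close to $\lambda^{1/d}$, with a single slack mode absorbing the normalization. This removes the need for the derivative computation and the open-mapping theorem. Because it uses $d+1$ rather than $d+2$ modes, it also yields the stronger $d=\min\{n,m-1\}$, which is still compatible with the paper's dimension count $N\le(m-1)L+\lfloor\frac{m-1}{2}\rfloor$. The paper's rank-based argument remains the more flexible template if one wants to vary $g$ or count degrees of freedom.

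One phrase should be tightened. When $\deg p<dL$, the polynomial $z^{dL}q$ has degree below $2dL$, so ``$dL$ roots tend to $\infty$'' is not literally true. A cleaner argument runs as follows. Since $q>0$ on the unit circle, exactly $dL$ roots of $z^{dL}q$ (counting those at $0$) lie in the open unit disk. Rouch\'e's theorem against $\lambda z^{dL}$ on $|z|=\rho$ puts all of them in $|z|<\rho$ once $s$ is small. Factoring with these roots gives $|e^{ix}-r_k|^2\to1$ uniformly, and hence $c\to\lambda$.
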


The proof of the theorem is detailed in Appendix \ref{apx:proof2}. The theorem is proven based on the fact that, for a suitably chosen $g$, any real-valued trigonometric polynomial of degree $\le dL$ can be expressed as the output function of the MPQNN, i.e., $\mathcal{P}_{dL}\subseteq\mathcal{H}_g$. As previously noted in Section \ref{sec:description}, the hypothesis space of a DRQNN is a subset of $\mathcal{P}_L$, since the spectrum of the output function is determined by the encoding blocks within the DRQNN. In contrast, the employment of multi-photon input state expands the hypothesis space of an MPQNN to $\mathcal{P}_{dL}$. The upper bound on the approximation error quantitatively describes the enhancement of the expressivity of MPQNNs brought about by increasing the photon number. For any target function, assuming the training process converges to the optimum, the error of the MPQNN in approximating the target function decays polynomially with $dL$. This implies that, provided the photon number $n$ does not exceed $m-2$, merely increasing the photon number can polynomially enhance the expressivity of an MPQNN without necessitating any alteration to the structure of the linear optical network.

On the other hand, the theorem also reveals the limitations of MPQNNs in terms of expressivity. The fact that $d=\min\{n,m-2\}$ indicates the existence of a threshold for the photon number $n$, which is $m-2$. When the photon number does not exceed the threshold, increasing the photon number enhances the expressivity of an MPQNN. Conversely, when the photon number exceeds the threshold, further increasing the photon number no longer affects the expressivity of the MPQNN. Note that, since the total degree of $g$ is no greater than $n$ and $y_1,\cdots,y_m\in\mathcal{P}_L$, the hypothesis space $\mathcal{H}_g$ is always a subspace of $\mathcal{P}_{nL}$. When $n$ does not exceed the threshold, $\mathcal{H}_g$ can cover the entire $\mathcal{P}_{nL}$. However, when $n$ exceeds the threshold, $\mathcal{H}_g$ becomes a low-dimensional subspace of $\mathcal{P}_{nL}$, rendering it impossible to cover the whole $\mathcal{P}_{nL}$. The existence of the photon number threshold is due to the fact that the number of trainable parameters in an MPQNN limits the degrees of freedom of the output function, thereby restricting the expressivity of the MPQNN. In the case of a fixed observable, all trainable parameters in an MPQNN reside within the trainable blocks of the linear optical network. The number of trainable parameters depends only on the mode number $m$ and the layer number $L$, and is independent of the photon number $n$. Therefore, if we increase the photon number without altering the structure of the linear optical network, the degrees of freedom of the output function are insufficient to allow it to express all functions in $\mathcal{P}_{nL}$.

\section{Expressivity of MPQNNs with Trainable Observable}
\label{sec:trainable}

Furthermore, we investigate the expressivity of MPQNNs in the case of a trainable observable. In this case, the measured observable $\hat{O}$ can be optimized during the training process in order to improve the performance of function approximation. That is, the weight $o_{\bm{t}}$ for each photon number partition in the expression of $\hat{O}$ is a trainable parameter, which is optimized synchronously with the trainable parameters in the linear optical network. As discussed in Section \ref{sec:description}, the hypothesis space of an MPQNN with a trainable observable is $\mathcal{H}$, which is the union of all hypothesis spaces $\mathcal{H}_g$ of MPQNNs with a fixed observable. It is not difficult to understand that an MPQNN with a trainable observable is more expressive than any MPQNN with a fixed observable. Consequently, the error bound in Theorem \ref{thm:fixed} also holds for the case of a trainable observable, which can also be derived from the fact that $\mathcal{H}_g\subseteq\mathcal{H}$ for all $g$. However, as we point out in Section \ref{sec:fixed}, MPQNNs with a fixed observable have limitations in expressivity. When the photon number exceeds a certain threshold, the error bound in Theorem \ref{thm:fixed} no longer decreases with further increases in the photon number, indicating that increasing the photon number beyond this point no longer affects the expressivity of the MPQNN. By proving an alternative upper bound on the approximation error, we demonstrate that MPQNNs with a trainable observable no longer suffer from such limitations. The expressivity of an MPQNN with a trainable observable can always be enhanced by increasing the photon number, without requiring any change to the structure of the linear optical network. We describe this result with a theorem.

\begin{theorem}\label{thm:trainable}
    For $K\in\mathbb{N}^+$, there exists a constant $C_K>0$ such that if $f\in C_{2\pi}(\mathbb{R})$ is a $K$-times continuously differentiable $2\pi$-periodic function, then
    \begin{eqnarray}
        \inf_{h\in\mathcal{H}}\Vert f-h\Vert_{\infty}\le\frac{C_K\lVert f^{(K)}\rVert_\infty}{d^K},
    \end{eqnarray}
    where $d=\min\{nL,\max\{(m-2)L,n\lfloor\frac{m-1}2\rfloor\}\}$.
\end{theorem}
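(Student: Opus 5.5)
As in Theorem~\ref{thm:fixed}, the plan is to reduce the statement to a containment of hypothesis spaces and then invoke Jackson's inequality. Concretely, we show $\mathcal{P}_{d}\subseteq\mathcal{H}$ for $d=\min\{nL,\max\{(m-2)L,n\lfloor\frac{m-1}2\rfloor\}\}$. Jackson's inequality then gives some $p\in\mathcal{P}_d$ with $\Vert f-p\Vert_\infty\le C_K\Vert f^{(K)}\Vert_\infty/d^K$, which is the claim. Since $\deg g\le n$ and $y_j\in\mathcal{P}_L$, every $h\in\mathcal{H}$ lies in $\mathcal{P}_{nL}$, so $d\le nL$ is the natural ceiling. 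It therefore suffices to prove two containments: $\mathcal{P}_{\min\{nL,(m-2)L\}}\subseteq\mathcal{H}$ and $\mathcal{P}_{\min\{nL,\,n\lfloor(m-1)/2\rfloor\}}\subseteq\mathcal{H}$. The maximum of the two degrees is then covered by one of them.

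\textbf{First containment.} This is inherited from the fixed-observable case. The proof of Theorem~\ref{thm:fixed} gives a $g$ with $\mathcal{P}_{\min\{n,m-2\}L}\subseteq\mathcal{H}_g$. Elements of $\mathcal{H}_g$ have the form $\alpha\pi_g(\bm y)=\pi_{\alpha g}(\bm y)$, and $\alpha g$ is again a real polynomial of degree $\le n$. Hence $\mathcal{H}_g\subseteq\mathcal{H}$, and the first containment follows at once.

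\textbf{Second containment.} This uses the extra freedom in $g$ to compensate for a small $L$; it matters when $n\lfloor\frac{m-1}2\rfloor$ exceeds $(m-2)L$, for instance when $L=1$. Set $k=\lfloor\frac{m-1}2\rfloor$ and target degree $D=\min\{nL,nk\}$. The construction proceeds in two steps.

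\emph{Choice of $\bm y$.} Pick $\bm y\in\Delta$ with $y_1\equiv c$ constant, and for $j=1,\dots,k$ use the pair of modes $(2j,2j+1)$ to carry
\[
y_{2j}=\tfrac{1-c}{2k}\bigl(1+\cos(\omega_jx)\bigr),\qquad y_{2j+1}=\tfrac{1-c}{2k}\bigl(1-\cos(\omega_jx)\bigr).
\]
Any leftover mode gets $0$. This choice is nonnegative and sums to $1$. The frequencies are $\omega_j\le L$; with $L\ge k$ one can take $\omega_j=j$, and in general some choice of $\omega_j\le L$ has to be made. Differences $y_{2j}-y_{2j+1}$ then recover $\cos(\omega_jx)$ up to scaling. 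Admissibility in $\mathcal{P}_L$ follows from Theorem~\ref{thm:multi-photon}, since any such $\bm y$ is realizable.

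\emph{Choice of $g$.} The constant $y_1$ lets us homogenize: any monomial of degree $\le n$ in the $y_j$ equals a polynomial of degree $\le n$ in the $y$'s. With $g$ ranging over all polynomials of degree $\le n$ in the $k$ variables $z_j=\cos(\omega_jx)$ (and, by a rotation inside each mode pair, $\sin$ as well), the products $\prod_j T_{a_j}(z_j)$ with $\sum a_j\le n$ generate frequencies $\sum_j \pm a_j\omega_j$.

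The goal is to show that, for suitable $\omega_j$, the real span of these products contains every $\cos(\ell x)$ and $\sin(\ell x)$ with $\ell\le D$. Because $g$ is arbitrary, $\mathcal{H}$ contains the whole span, not just individual products. One natural choice is a mixed-radix construction, e.g.\ $\omega_j$ growing so that combinations $\sum a_j\omega_j$ with $\sum a_j\le n$ hit every integer up to $D$. We would then peel off the top frequency by triangular elimination, inducting on $\ell$.

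\textbf{Main obstacle.} The hard part is this spectral-covering/linear-independence argument for the second containment. Three things must be handled together: reconciling the constraint $\omega_j\le L$ with the target $nk$, obtaining both sine and cosine components (which needs a phase freedom in each pair, or using three modes), and the parity bookkeeping behind $\lfloor\frac{m-1}2\rfloor$. We would first try $\omega_j\equiv L$ with phases $\phi_j$, namely $y_{2j}\propto 1+\cos(Lx+\phi_j)$, and show via a Vandermonde-type argument that products of $n$ such factors with distinct phases span all of $\mathcal{P}_{nL}\cap\operatorname{span}\{e^{i\ell x}\}$ up to the required degree. If that fails, we would fall back to exhibiting the frequency combinations explicitly.
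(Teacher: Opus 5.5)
Your reduction, the identity $\max\{\min\{nL,(m-2)L\},\min\{nL,nk\}\}=d$, and the first containment (via $\alpha\pi_g(\bm y)=\pi_{\alpha g}(\bm y)$) are fine. The gap is the second containment $\mathcal{P}_{\min\{nL,nk\}}\subseteq\mathcal{H}$. This is the only part of the statement that goes beyond Theorem~\ref{thm:fixed}, and your construction cannot deliver it.

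With $y_1$ constant and $y_{2j},y_{2j+1}=\frac{1-c}{2k}(1\pm\cos\omega_jx)$, every $y_j$ is an even function of $x$. So every $\pi_g(\bm y)$ is even, and $\sin x$ is never produced.

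A beam splitter inside a pair only changes the phase and amplitude of that one cosine. Because $y_{2j}+y_{2j+1}$ is constant, each pair contributes a single non-constant direction. Hence $\mathrm{span}\{y_1,\dots,y_m\}$ has dimension at most $k+1$, about half of what $m$ modes can support. The dedicated constant mode is wasted as well, since $\sum_j y_j\equiv 1$ already supplies constants.

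Your fallback $\omega_j\equiv L$ puts every $y_j$ in $\mathrm{span}\{1,\cos Lx,\sin Lx\}$. Then only frequencies divisible by $L$ occur, which fails for every $L\ge 2$.

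A clean failing case is $m=3$ with $n>L$. There the theorem asks for $d=n$, and only the second containment can supply it. With your single pair, all outputs are polynomials of degree $\le n$ in one function $\cos(\omega x+\phi)$. That space has dimension at most $n+1<2n+1=\dim\mathcal{P}_n$.

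The missing idea is to make $\mathrm{span}\{y_1,\dots,y_m\}$ contain all of $\mathcal{P}_{d'}$, with $d'=\min\{L,k\}$, using $2d'+1\le m$ modes that share one constant. The paper does this as follows.

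It encloses the compact curve $\bm z(x)=(\cos x,\sin x,\dots,\cos d'x,\sin d'x)\in\mathbb{R}^{2d'}$ in a simplex $\mathrm{conv}\{\bm v_1,\dots,\bm v_{2d'+1}\}$. It takes $y_1,\dots,y_{2d'+1}$ to be the barycentric coordinates of $\bm z(x)$, and sets the remaining $y_j\equiv 0$.

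Barycentric coordinates are affine in $\bm z$, so $y_j\in\mathcal{P}_{d'}\subseteq\mathcal{P}_L$. They are nonnegative and sum to $1$, so $\bm y\in\Delta$. Moreover $\bm z=\sum_j y_j\bm v_j$ is linear in $\bm y$.

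For $|\ell|\le nd'$, $e^{i\ell x}$ is a product of at most $n$ factors $e^{\pm ijx}$ with $j\le d'$. So every $p\in\mathcal{P}_{nd'}$ is a real polynomial of degree $\le n$ in $\bm z$, hence $p=\pi_g(\bm y)$ with $\deg g\le n$.

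This is where $\lfloor\frac{m-1}{2}\rfloor$ genuinely comes from: $m-1$ non-constant directions, two per frequency. It also removes any need for the spectral-covering or Vandermonde arguments you were planning.
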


The proof of the theorem is detailed in Appendix \ref{apx:proof3}. This error bound is obtained by combining the bound from Theorem \ref{thm:fixed} with another bound. As $n\to\infty$, we have $d\to\infty$, and the upper bound on the approximation error tends to zero, indicating that increasing the photon number can always enhance the expressivity of an MPQNN with a trainable observable. However, when the photon number $n$ and the layer number $L$ are within a certain range, the error bounds in Theorem \ref{thm:fixed} and Theorem \ref{thm:trainable} are both $O((nL)^{-K})$. In this case, an MPQNN with a trainable observable is not more expressive than an MPQNN with a suitably chosen fixed observable. Moreover, an MPQNN with a trainable observable achieves stronger expressivity at the cost of introduce $n+m-1\choose m-1$ additional trainable parameters, and this quantity increases exponentially with $n$. Whether the number of trainable parameters in the observable can be reduced without diminishing the expressivity of the MPQNN remains an open question.

\section{Numerical Experiments}
\label{sec:experiment}

\begin{figure*}[t]
    \subfigure[Approximation functions]{
        \includegraphics[width=0.45\linewidth]{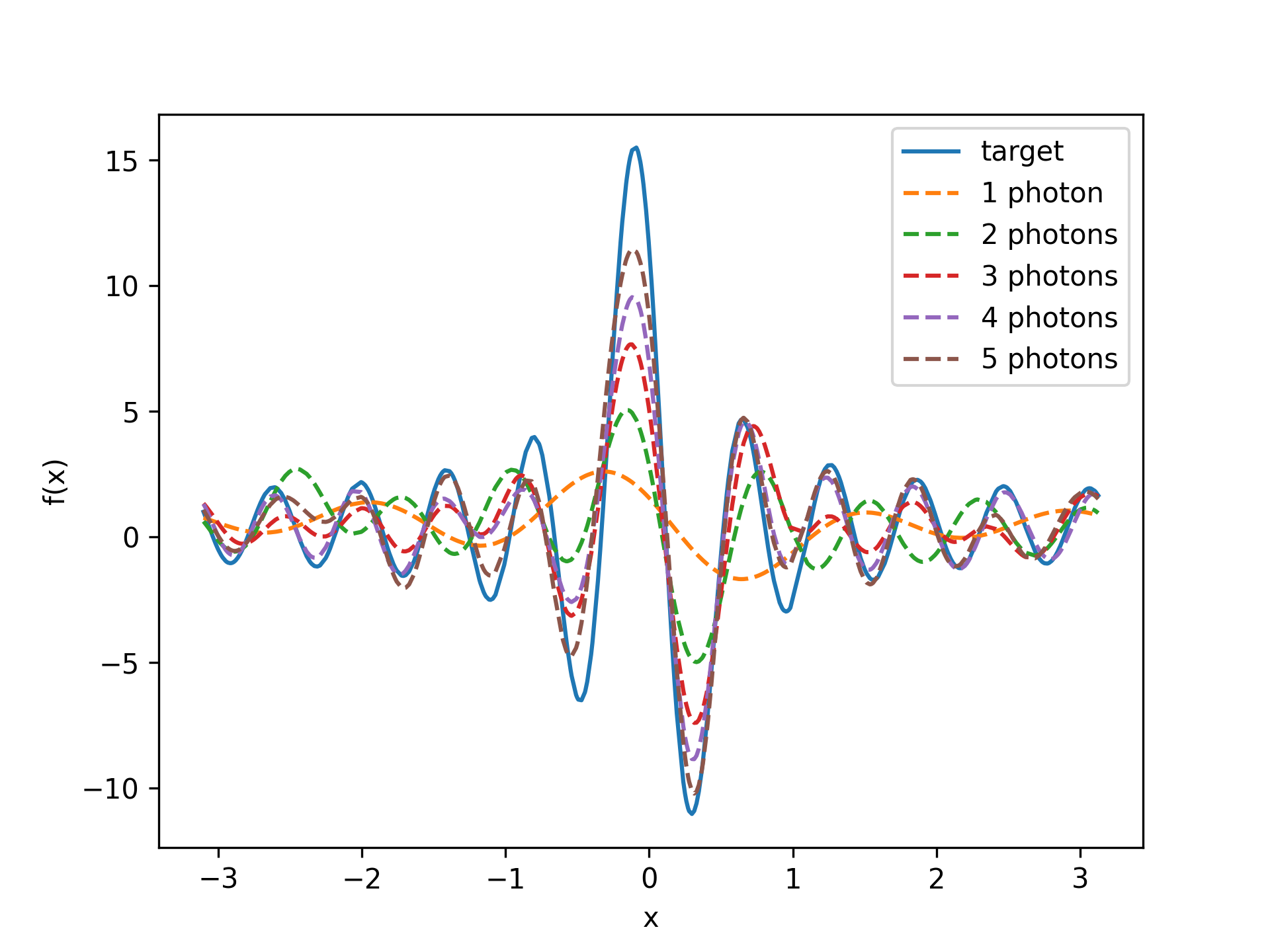}
    }
    \subfigure[Losses on the test dataset]{
        \includegraphics[width=0.45\linewidth]{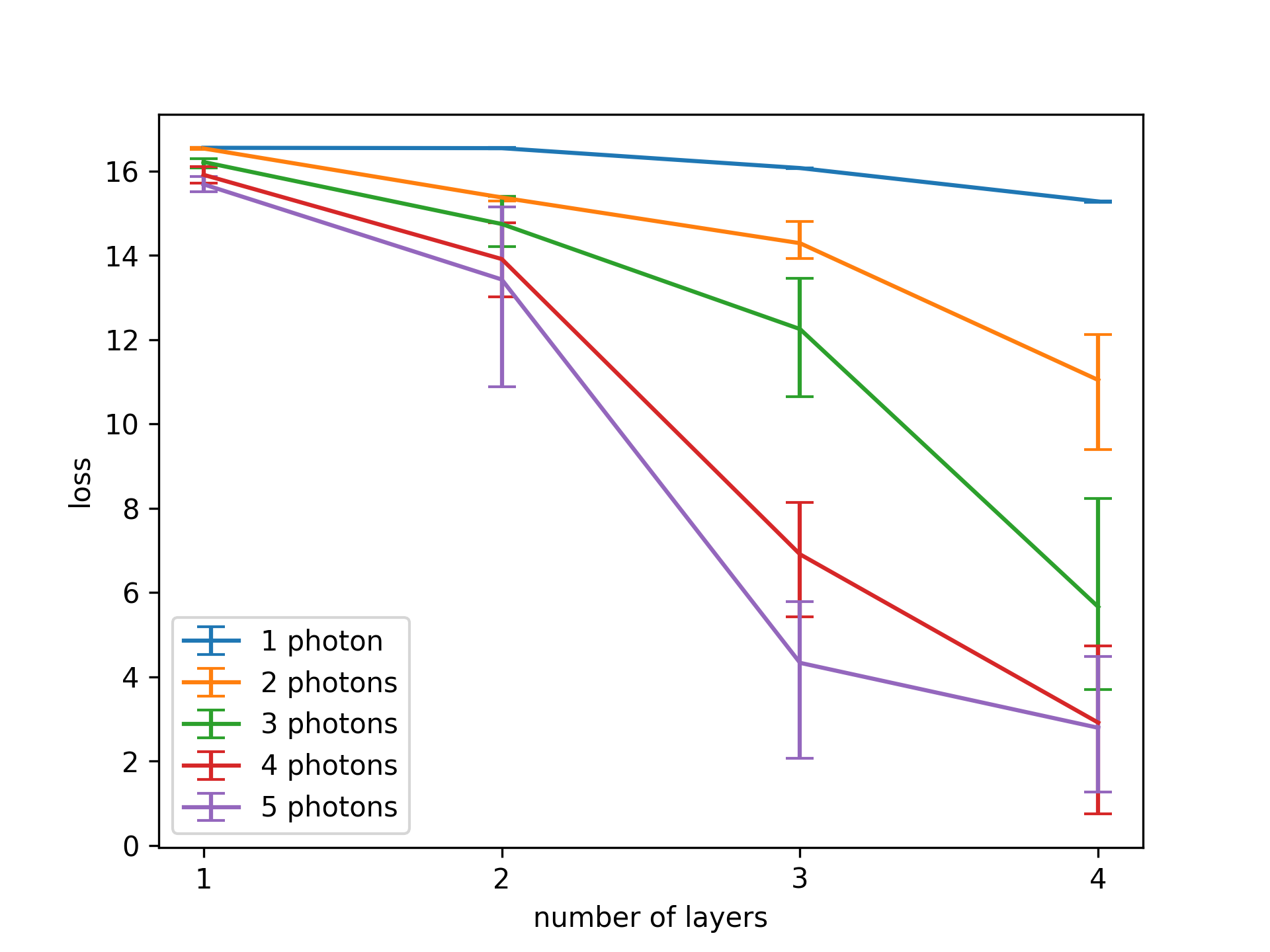}
    }
    \caption{(a) Approximation functions by an MPQNN with 4 layers. The photon number varies from 1 to 5. (b) Losses on the test dataset with the increase of the layer number, with variate photon number.}
    \label{fig:loss_and_approx}
\end{figure*}

We demonstrate the expressivity of MPQNNs by applying it in function approximation tasks. We simulate the training of MPQNNs using an algorithm based on dynamic programming. The error between the trained model and the target function indicates that the performance of the MPQNN improves with an increasing photon number.

The target function takes the form of a trigonometric polynomial
\begin{eqnarray}
    f(x)=\sum_{\omega=-N}^{N}c_{\omega}e^{i\omega x}.
\end{eqnarray}
In the simulation, we set the coefficients as follows. 
\begin{eqnarray}
    &&c_0=\frac12,\\
    &&c_\omega=\frac{|\omega|}{N}+i\frac{\omega}{N},\omega\neq0.
\end{eqnarray}
And we set the degree $N=10$. In the simulation, the mode number is fixed as $m=4$. We change the photon number $n$ from $1$ to $5$ and train MPQNNs with the maximum layer number being $L=4$. For each $n$, the input state $\lvert\bm{s}\rangle=\lvert n,0,\cdots,0\rangle$, which means that all $n$ photons are input into the first mode. We train the MPQNN with the mean square error (MSE) as the loss function, using the gradient descent algorithm to update the trainable parameters both in the linear optical network and the measured observable.

We propose a dynamic programming algorithm to compute the output photon number distribution of a linear optical network driven by Fock state inputs, and use it to simulate the training of MPQNNs. Compared with evaluating permanents via Ryser’s algorithm, our method offers an improvement in time complexity and substantially reduces runtime. Algorithmic details are given in the Appendix \ref{apx:algorithm}.

Fig.\ref{fig:loss_and_approx} (a) illustrates the function learned by the MPQNN with a different photon number. It can be demonstrated that models with more photons have more frequency components, thereby enabling them to approximate more complex target functions. The loss on the test dataset is depicted in Fig.\ref{fig:loss_and_approx} (b). The mean test loss decreases as the input photon number and the layer number increase, in agreement with the theoretic results.

\section{Conclusion}

In this work, we discussed the enhancement of QNNs using a Fock state as input. Following the data re-uploading construction of QNNs, we use phase shifters as encoding blocks and universal linear optical networks as trainable blocks. The output of an MPQNN is a measurement in the Fock space.

Theoretically, we analyzed the expressivity of MPQNNs. By realizing trigonometric polynomials, MPQNNs possesses the universal approximation property. To quantitatively describe the expressivity of MPQNNs, we estimated the approximation error for MPQNNs and investigated the relation between the approximation error and the photon number $n$, the mode number $m$ and the layer number $L$. In the case of a fixed observable, we derived an upper bound on the approximation error, which is $O((\min\{n,m-2\}L)^{-K})$. This indicates that there exists a threshold such that, when the photon number exceeds this threshold, further increasing the photon number cannot enhance the expressivity of an MPQNN with a fixed observable. In the case of a trainable observable, we also derived a error bound, which is $O((\min\{nL,\max\{(m-2)L,n\lfloor\frac{m-1}2\rfloor\}\})^{-K})$. This indicates that the expressivity of an MPQNN with a trainable observable can be enhanced by increasing the photon number without altering the structure of the linear optical network.

To efficiently simulate MPQNNs on the classical computer, we designed an algorithm based on dynamic programming to calculate the probability distribution of a linear optical network with multi-photon input. The result of the simulation demonstrated that the performance of the MPQNN increases as more photons are input.

For further research in the expressivity of MPQNNs, we expect an extension to the approximation of multivariate functions. And, as we have previously mentioned, the parameters in the measured observable can be partially fixed and partially trainable. How the number of such trainable parameters affects the expressivity of MPQNNs remains an open question worthy of further investigation. Except for the expressivity, the trainability and generalization of MPQNNs remain to be investigated. We believe that our work can provide theoretical assistance in demonstrating the advantages of multi-photon systems in quantum machine learning problems, and offer guidance for the application of QNNs on photonic platforms.

\appendix

\section{Proof of Theorem 1}
\label{apx:proof1}

In this appendix, we provide a proof of Theorem \ref{thm:multi-photon}, which characterizes the form of possible output functions of an MPQNN. We first present a lemma to describe the expressivity of MPQNNs with single-photon input. It is a natural generalization of Lemma 3 in \cite{Yu2022}, which is the special case where the mode number in the linear optical network $m=2$.

\begin{lemma}\label{thm:single-photon}
    There exist unitary matrices $U_0,\cdots,U_L$ such that the first column of the matrix
    \begin{eqnarray}
        U(x)=U_LU_{PS}(x)\cdots U_1U_{PS}(x)U_0.
    \end{eqnarray}
    is $u_1(x)=(u_{1,1}(x),\cdots,u_{m,1}(x))^T$ if and only if $u_{1,1},\cdots,u_{m,1}$ satisfy
    \begin{eqnarray}\label{eq:single-photon}
        &&\forall j\in[m],u_{j,1}\in\mathbb{C}[e^{ix}],\\
        &&\forall j\in[m],\deg(u_{j,1})\le L,\nonumber\\
        &&\forall x\in\mathbb{R},\sum_{j=1}^m|u_{j,1}(x)|^2=1.\nonumber
    \end{eqnarray}
\end{lemma}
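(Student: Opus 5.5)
The plan is to prove the two directions separately, with the "only if" direction by a direct expansion and the "if" direction by induction on $L$ via degree reduction, in the spirit of Lemma 3 of \cite{Yu2022}.

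\emph{Necessity.} Write $U_{PS}(x)=P_0+e^{ix}P_1$, where $P_1=e_1e_1^T$ and $P_0=I-P_1$. Expanding the product $U_LU_{PS}(x)\cdots U_1U_{PS}(x)U_0$ shows that every entry of $U(x)$ lies in $\mathbb{C}[e^{ix}]$ and has degree at most $L$. Unitarity of $U(x)$ for each real $x$ gives that its first column has unit norm, which is the last condition in (\ref{eq:single-photon}).

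\emph{Sufficiency.} Induct on $L$. For $L=0$ the vector $u_1$ is a constant unit vector, so we may take $U_0$ to be any unitary whose first column is $u_1$. For $L\ge1$, write $u_1(x)=\sum_{k=0}^L a_ke^{ikx}$ with $a_k\in\mathbb{C}^m$. Comparing the coefficient of $e^{iLx}$ in the identity $\sum_j|u_{j,1}(x)|^2=1$ gives
\begin{equation*}
a_0^\dagger a_L=0 .
\end{equation*}
If $a_L=0$, the induction hypothesis applies directly after padding with $U_{PS}(x)$ layers (inserting an identity after one phase shifter needs care). A cleaner route is to peel off one layer. We look for a unitary $V$ such that
\begin{equation*}
v(x):=U_{PS}(x)^\dagger V^\dagger u_1(x)=\bigl(e^{-ix}P_1+P_0\bigr)V^\dagger u_1(x)
\end{equation*}
is again a polynomial in $e^{ix}$ of degree at most $L-1$. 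Since $U_{PS}(x)^\dagger V^\dagger$ is unitary, $v$ automatically has unit norm. Two conditions are needed. First, the $e^{0}$ coefficient of the first component of $V^\dagger u_1$ must vanish, i.e.\ $(V^\dagger a_0)_1=0$. Second, the $e^{iLx}$ coefficients of the other components must vanish, i.e.\ $P_0V^\dagger a_L=0$. Together these say that the first column $w$ of $V$ satisfies $w\perp a_0$ and $V^\dagger a_L\parallel e_1$. Choose $w=a_L/\|a_L\|$ when $a_L\neq0$; the orthogonality $a_0^\dagger a_L=0$ then gives $w\perp a_0$, and the rest of $V$ is any completion to a unitary. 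If $a_L=0$, choose $w$ to be any unit vector orthogonal to $a_0$, which exists because $m\ge2$. This makes $v$ a polynomial of degree at most $L-1$ in $e^{ix}$ with $\|v(x)\|=1$. By induction, $v$ is the first column of $U_{L-1}U_{PS}\cdots U_0$, and setting $U_L=V$ finishes the construction.

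The main obstacle is this degree-reduction step: verifying that the single orthogonality relation extracted from the top Fourier coefficient of $\|u_1\|^2$ suffices, and handling the degenerate case $a_L=0$, including when $a_0=0$ as well, where $w$ is arbitrary. I expect everything else to be routine.
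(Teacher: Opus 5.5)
Your proposal is correct and follows essentially the same route as the paper. Both arguments induct on $L$, rotate the top coefficient vector onto $e_1$ by a unitary, use the vanishing top Fourier coefficient $a_0^\dagger a_L$ of $\sum_j|u_{j,1}|^2$ to force a zero constant term in the first component, and then peel off one phase shifter. Your single unitary $V$ with first column $a_L/\Vert a_L\Vert$ is a tidier version of the paper's successive $2\times 2$ rotations, and your treatment of $a_L=0$ (choosing $w\perp a_0$, using $m\ge 2$) also covers the constant case with $L\ge 1$, where the paper's argument based on the actual maximal degree $d$ does not literally apply when $d=0$.
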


\begin{proof}
    First, we prove that the elements of the first column of the matrix $U(x)$ satisfy the conditions in Eq.(\ref{eq:single-photon}), by induction on $L$. For the base case, if $L=0$, then $U(x)=U_0$ does not depend on $x$. Each element of the first column of $U_0$ can be viewed as a trigonometric polynomial of degree $0$, which satisfies the first and second conditions. The third condition follows from the unitarity of $U_0$.
    
    For the induction step, suppose that the forward direction of the lemma holds for $L-1$. Denote $U^\prime(x)=U_{L-1}U_{PS}(x)\cdots U_1U_{PS}(x)U_0$, the first column of which is $w_1(x)=(w_{1,1}(x),\cdots,w_{m,1}(x))^T$. According to the inductive hypothesis, $w_{1,1}(x),\cdots,w_{m,1}(x)$ satisfy the three conditions. Denote
    \begin{eqnarray}
        U_L=\begin{pmatrix}
            v_{1,1}&\cdots&v_{1,m}\\
            \vdots&\ddots&\vdots\\
            v_{m,1}&\cdots&v_{m,m}
        \end{pmatrix}.
    \end{eqnarray}
    Then the $j$-th element of the first column of $U(x)$ is
    \begin{eqnarray}
        u_{j,1}(x)=v_{j,1}e^{ix}w_{1,1}(x)+\sum_{k=2}^mv_{j,k}w_{k,1}(x).
    \end{eqnarray}
    Since $w_{1,1},\cdots,w_{m,1}\in\mathbb{C}[e^{ix}]$ are of degree at most $L-1$, $u_{1,1},\cdots,u_{m,1}$ satisfy the first and second conditions. The third condition follows from the unitarity of $U(x)$.
    
    Now we prove by induction on $L$ that for all $u_{1,1},\cdots,u_{m,1}$ satisfying the conditions, there exist unitary matrices $U_0,\cdots,U_L$ that make the first column of $U(x)$ be $u_1(x)=(u_{1,1}(x),\cdots,u_{m,1}(x))^T$. For the base case, if $L=0$, then $U(x)=U_0$ and $u_1=(u_{1,1},\cdots,u_{m,1})^T$ do not depend on $x$. Applying Gram-Schmidt orthogonalization to $u_1$ yields the unitary matrix $U_0$.
    
    For the induction step, suppose that the reverse direction of the lemma holds for $L-1$ and adopt the same notation as above. Giving $U(x)$, the elements of the first column of which satisfy the three conditions, we aim to construct the unitary matrices $U_0,\cdots,U_L$. Let $d=\max_{1\le j\le m}\{\deg(u_{j,1})\}$ and without loss of generality $\deg(u_{1,1})=d$. 
    To see this, assume that one of the trigonometric polynomials $u_{2,1},\cdots,u_{m,1}$ is of degree $d$, say $\deg(u_{2,1})=d$. Denote
    \begin{eqnarray}
        u_{1,1}(x)=\sum_{k=0}^da_ke^{ikx},\\
        u_{2,1}(x)=\sum_{k=0}^db_ke^{ikx}.
    \end{eqnarray}
    We can reduce the degree of $u_{2,1}$ by left-multiplying $U_L$ with a unitary matrix.
    \begin{eqnarray}
        &&\begin{pmatrix}
            \cos\theta&-e^{i\phi}\sin\theta&&&\\
            \sin\theta&e^{i\phi}\cos\theta&&&\\
            &&1&&\\
            &&&\ddots&\\
            &&&&1
        \end{pmatrix}
        \begin{pmatrix}
            \sum_{k=0}^da_ke^{ikx}\\
            \sum_{k=0}^db_ke^{ikx}\\
            \vdots
        \end{pmatrix}\\\nonumber
        &&=\begin{pmatrix}
            \sum_{k=0}^d(a_k\cos\theta-b_ke^{i\phi}\sin\theta)e^{ikx}\\
            \sum_{k=0}^d(a_k\sin\theta+b_ke^{i\phi}\cos\theta)e^{ikx}\\
            \vdots
        \end{pmatrix}.
    \end{eqnarray}
    For $a_d,b_d\in\mathbb{C}$, there always exist $\theta,\phi\in\mathbb{R}$ that make $a_d\sin\theta+b_de^{i\phi}\cos\theta=0$. Thus, without loss of generality, we may assume that $\deg(u_{j,1})\le d-1,j=2,\cdots,m$. 
    
    Since $u_{1,1}(x),\cdots,u_{m,1}(x)$ satisfy the third condition, all coefficients of $\sum_{j=1}^m|u_{j,1}(x)|^2$ except the constant term must be zero. By calculating the coefficient of the $d$-th degree term, we can find that $a_da_0=0$. By the definition of $d$, $a_d\neq 0$. Then $a_0=0$, which means there exists a trigonometric polynomial $w_{1,1}\in\mathbb{C}[e^{ix}]$ such that $u_{1,1}(x)=e^{ix}w_{1,1}(x)$. Let $w_{j,1}(x)=u_{j,1}(x),j=2,\cdots,m$. It is clear that $\deg(w_{j,1})\le d-1\le L-1,j=1,\cdots,m$, and that $\sum_{j=1}^m|w_{j,1}(x)|^2=\sum_{j=1}^m|u_{j,1}(x)|^2=1$. According to the inductive hypothesis, there exist $U_0,\cdots,U_{L-1}$ such that the first column of $U_{L-1}U_{PS}(x)\cdots U_1U_{PS}(x)U_0$ is $(w_{1,1}(x),\cdots,w_{m,1}(x))^T$. Let $U_L$ be the identity matrix. Then, the first column of $U_LU_{PS}(x)\cdots U_1U_{PS}(x)U_0$ is $(u_{1,1}(x),\cdots,u_{m,1}(x))^T$.
\end{proof}

The lemma indicates that the elements of the unitary matrix of the linear optical network of an MPQNN are trigonometric polynomials that satisfy the normalization condition. And the degrees of the trigonometric polynomials are upper-bounded by the layer number $L$. Based on this lemma, we can prove Theorem \ref{thm:multi-photon}, which we restate for completeness.

\begin{restate}
    There exists an MPQNN whose output function is $h(x)$ if and only if there exists a real polynomial $g\in\mathbb{R}[x_1,\cdots,x_m]$ of total degree $\le n$ and trigonometric polynomials $y_1,\cdots,y_m\in\mathcal{P}_L$ such that $h=\pi_g(y_1,\cdots,y_m)$, where $y_1,\cdots,y_m$ satisfy
    \begin{eqnarray} \label{eq:thm1_constr}
        &&\forall j\in[m],\forall x\in\mathbb{R},y_j(x)\ge 0,\\
        &&\forall x\in\mathbb{R},\sum_{j=1}^my_j(x)=1.\nonumber
    \end{eqnarray}
    And $\pi_g$ is the assignment map
    \begin{eqnarray}
        \pi_g(y_1,\cdots,y_m)(x):=g(y_1(x),\cdots,y_m(x)).
    \end{eqnarray}
\end{restate}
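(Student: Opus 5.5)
The plan is to compute the output function explicitly in terms of the first column of $U(x)$, and then use Lemma \ref{thm:single-photon} together with the Fejér--Riesz theorem to move between the two descriptions.

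\textbf{Computing the output.} Write $u_{j,1}(x)$ for the first column of $U(x)$. Expand $\hat{U}(x)\lvert\bm{s}\rangle=\frac1{\sqrt{n!}}\bigl(\sum_j u_{j,1}\hat a_j^\dagger\bigr)^n\lvert\mathrm{vac}\rangle$ with the multinomial theorem and use $(\hat a_j^\dagger)^{t_j}\lvert 0\rangle=\sqrt{t_j!}\lvert t_j\rangle$. This gives
\begin{eqnarray}
\langle\bm{t}\rvert\hat U(x)\lvert\bm{s}\rangle=\sqrt{\frac{n!}{t_1!\cdots t_m!}}\prod_{j=1}^m u_{j,1}(x)^{t_j}.
\end{eqnarray}
Since $\hat O$ is diagonal in the Fock basis, it follows that
\begin{eqnarray}
h(x)=\sum_{\bm{t}\in\Phi_{m,n}}o_{\bm{t}}\binom{n}{\bm{t}}\prod_{j=1}^m y_j(x)^{t_j},\qquad y_j:=|u_{j,1}|^2,
\end{eqnarray}
where $\binom{n}{\bm{t}}=\frac{n!}{t_1!\cdots t_m!}$ is the multinomial coefficient.

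\textbf{Forward direction.} Set $g(x_1,\dots,x_m)=\sum_{\bm{t}}o_{\bm{t}}\binom{n}{\bm{t}}x^{\bm{t}}$. This is a real homogeneous polynomial of degree $n$, and $h=\pi_g(y_1,\dots,y_m)$. By Lemma \ref{thm:single-photon}, each $u_{j,1}$ is a polynomial in $e^{ix}$ of degree $\le L$ and $\sum_j|u_{j,1}|^2=1$. Hence $y_j=u_{j,1}\overline{u_{j,1}}$ is a real, nonnegative trigonometric polynomial with frequencies in $[-L,L]$, i.e.\ $y_j\in\mathcal{P}_L$, and $\sum_j y_j=1$. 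So Eq.~(\ref{eq:thm1_constr}) holds.

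\textbf{Converse direction.} Take $g$ of total degree $\le n$ and $(y_1,\dots,y_m)\in\Delta$.
\begin{itemize}
\item First homogenize. Replace each monomial of degree $k<n$ in $g$ by that monomial times $(x_1+\cdots+x_m)^{n-k}$. The result is a homogeneous $G$ of degree $n$ with $G=g$ on the hyperplane $\sum_j x_j=1$. Since $\sum_j y_j=1$ everywhere, $\pi_G(\bm{y})=\pi_g(\bm{y})$.
\item Writing $G=\sum_{\bm{t}\in\Phi_{m,n}}b_{\bm{t}}x^{\bm{t}}$, define the observable weights $o_{\bm{t}}:=b_{\bm{t}}/\binom{n}{\bm{t}}\in\mathbb{R}$.
\item Next realize the $y_j$ as squared amplitudes. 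Each $y_j$ is a nonnegative trigonometric polynomial of degree $\le L$. By the Fejér--Riesz theorem there is $q_j\in\mathbb{C}[e^{ix}]$ of degree $\le L$ with $|q_j(x)|^2=y_j(x)$. Put $u_{j,1}:=q_j$. These satisfy the three conditions of Eq.~(\ref{eq:single-photon}), since $\sum_j|q_j|^2=\sum_j y_j=1$.
\item Lemma \ref{thm:single-photon} then supplies unitaries $U_0,\dots,U_L$ whose circuit has first column $(q_1,\dots,q_m)^T$.
\item With these unitaries and the observable weights $o_{\bm{t}}$, the formula from the first step gives output $\pi_G(\bm y)=\pi_g(\bm y)=h$.
\end{itemize}

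\textbf{Main obstacle.} The delicate points are the amplitude formula and the Fejér--Riesz step. For the amplitude formula, the normalization factors must be tracked carefully so that the multinomial weights are exactly what get absorbed into $o_{\bm t}$. For Fejér--Riesz, the factor must be an \emph{analytic} polynomial in $e^{ix}$, with only nonnegative powers, of degree $\le L$, as Lemma \ref{thm:single-photon} requires. Fejér--Riesz provides precisely this: a nonnegative $\sum_{|k|\le L}c_ke^{ikx}$ equals $|\sum_{k=0}^L a_ke^{ikx}|^2$. The homogenization step is routine, but it is where $\sum_j y_j=1$ is used essentially, and it is what makes a non-homogeneous $g$ of degree $\le n$ admissible.
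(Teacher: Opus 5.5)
Your proposal is correct and follows essentially the same route as the paper: the same multinomial computation of $\langle\bm{t}\rvert\hat U(x)\lvert\bm{s}\rangle$ and Lemma~\ref{thm:single-photon} for the forward direction, and for the converse Fej\'er--Riesz plus Lemma~\ref{thm:single-photon}, homogenization via $\sum_j x_j=1$, and weights $o_{\bm t}=\frac{t_1!\cdots t_m!}{n!}\,b_{\bm t}$. The only difference is that you homogenize before invoking Fej\'er--Riesz rather than after, which is immaterial.
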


\begin{proof}
    We first prove that the output function of an MPQNN necessarily takes the form described by the theorem. The output function of an MPQNN is
    \begin{eqnarray}
        h(x)&&=\langle\bm{s}\rvert\hat{U}^\dagger(x)\hat{O}\hat{U}(x)\lvert\bm{s}\rangle\\
        &&=\langle\bm{s}\rvert\hat{U}^\dagger(x)\left(\sum_{\bm{t}\in\Phi_{m,n}}o_{\bm{t}}\lvert\bm{t}\rangle\langle\bm{t}\rvert\right)\hat{U}(x)\lvert\bm{s}\rangle\\
        &&=\sum_{\bm{t}\in\Phi_{m,n}}o_{\bm{t}}|\langle\bm{t}\rvert\hat{U}(x)\lvert\bm{s}\rangle|^2\\
        &&=\sum_{\bm{t}\in\Phi_{m,n}}\frac{o_{\bm{t}}}{n!}\left|\langle\bm{t}\rvert\left(\sum_{j=1}^mu_{j,1}(x)\hat{a}_j^\dagger\right)^n\lvert\mathrm{vac}\rangle\right|^2\\
        &&=\sum_{\bm{t}\in\Phi_{m,n}}\frac{o_{\bm{t}}}{n!}\left|\frac{n!}{\sqrt{t_1!\cdots t_m!}}\prod_{j=1}^mu_{j,1}^{t_j}(x)\right|^2\\
        &&=\sum_{\bm{t}\in\Phi_{m,n}}\frac{n!}{t_1!\cdots t_m!}o_{\bm{t}}\prod_{j=1}^m|u_{j,1}(x)|^{2t_j}.
    \end{eqnarray}
    Let $y_j(x)=|u_{j,1}(x)|^2$ for $j\in[m]$. From Lemma \ref{thm:single-photon}, it follows that $(y_1,\cdots,y_m) \in \Delta$. Let
    \begin{eqnarray}
        g(x_1,\cdots,x_m)=\sum_{\bm{t}\in\Phi_{m,n}}\frac{n!}{t_1!\cdots t_m!}o_{\bm{t}}\prod_{j=1}^mx_j^{t_j}.
    \end{eqnarray}
    Clearly, $g$ is a real polynomial of total degree $\le n$. Thus, $h(x)=\pi_g(y_1,\cdots,y_m)$ takes the form described by the theorem.

    Now we prove that for all $h$ taking the form, there exists an MPQNN whose output function is $h$. Suppose that $(y_1,\cdots,y_m)\in \Delta$. For $j\in[m]$, according to the Fejer-Riesz theorem \cite{Riesz1990}, there exists $u_{j,1}\in\mathbb{C}[e^{ix}]$ such that $y_j(x)=|u_{j,1}(x)|^2$ for all $x\in\mathbb{R}$. Thus, $u_{1,1},\cdots,u_{m,1}$ satisfy the conditions in Eq.(\ref{eq:single-photon}). From Lemma \ref{thm:single-photon}, it follows that there exists unitary matrices $U_0,\cdots,U_L$ such that the first column of $U_LU_{PS}(x)\cdots U_1U_{PS}(x)U_0$ is $(u_{1,1}(x),\cdots,u_{m,1}(x))^T$. Suppose that $g\in\mathbb{R}[x_1,\cdots,x_m]$ is of total degree $\le n$. We can always construct a homogeneous real polynomial $g^\prime$ of total degree $n$ by multiplying the lower degree terms in $g$ by $1=x_1+\cdots+x_m$ considering \ref{eq:thm1_constr}. Then $g^\prime$ takes the form
    \begin{eqnarray}
        g^\prime(x_1,\cdots,x_m)=\sum_{\bm{t}\in\Phi_{m,n}}g_{\bm{t}}\prod_{j=1}^mx_j^{t_j}.
    \end{eqnarray}
    By the construction of $g^\prime$, it follows that
    \begin{eqnarray}
        \pi_{g^\prime}(y_1,\cdots,y_m)=\pi_g(y_1,\cdots,y_m)=h.
    \end{eqnarray}
    Let
    \begin{eqnarray}
        \hat{O}=\sum_{\bm{t}\in\Phi_{m,n}}\frac{t1!\cdots t_m!}{n!}g'_{\bm{t}}\lvert\bm{t}\rangle\langle\bm{t}\rvert.
    \end{eqnarray}
    The output function of the MPQNN defined by trainable blocks $U_0,\cdots,U_L$ and observable $\hat{O}$ is $h$.
\end{proof}

From the proof of the theorem, it can be seen that the trigonometric polynomials $y_1,\cdots,y_m$ are determined by the trainable blocks within the MPQNN, while the real polynomial $g$ is determined by the measured observable.

\section{Proof of Theorem 2}
\label{apx:proof2}
In this appendix, we provide a proof of Theorem \ref{thm:fixed}, which gives an upper bound on the approximation error of an MPQNN with a fixed observable. An important tool we employ to prove the error bound is Jackson's inequality.

\begin{proposition}[\cite{Lorentz1966}, Theorem 4.3, p. 57]\label{thm:jackson}
    For $K\in\mathbb{N}$, there exists a constant $C_K>0$ such that if $f\in C_{2\pi}(\mathbb{R})$ is a $K$-times continuously differentiable $2\pi$-periodic function, then for every $N\in\mathbb{N}^+$ there exists a trigonometric polynomial $p_f\in\mathcal{P}_N$ such that
    \begin{eqnarray}
        \Vert f-p_f\Vert_\infty\le\frac{C_K\omega(f^{(K)},\frac{1}{N})}{N^K},
    \end{eqnarray}
    where $\omega(f,\delta)$ is the modulus of continuity,
    \begin{eqnarray}
        \omega(f,\delta):=\sup_{x,t\in\mathbb{R},|t|<\delta}|f(x+t)-f(x)|.
    \end{eqnarray}
    Moreover, if $f$ is $(K+1)$-times continuously differentiable, then
    \begin{eqnarray}
        \Vert f-p_f\Vert_\infty\le\frac{C_K\Vert f^{(K+1)}\Vert_\infty}{N^{K+1}}.
    \end{eqnarray}
\end{proposition}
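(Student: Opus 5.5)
Since this is a classical result quoted from \cite{Lorentz1966}, I would reproduce the standard two-stage argument: first the case $K=0$ by convolution with a positive kernel, then the passage to higher $K$ via a Favard-type inequality relating the best approximation of a function to that of its derivative. Throughout, write $E_N(f):=\inf_{p\in\mathcal{P}_N}\Vert f-p\Vert_\infty$. Real-valuedness of the approximant is automatic: one may replace any complex trigonometric polynomial $p$ by $\mathrm{Re}\,p$, which lies in $\mathcal{P}_N$ and is no farther from the real function $f$.

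\emph{Step 1 ($K=0$).} Take the Jackson kernel
\[
J_N(t)=\lambda_N\left(\frac{\sin(rt/2)}{\sin(t/2)}\right)^4,\qquad r=\lfloor N/2\rfloor+1,
\]
normalized so that $\frac1{2\pi}\int_{-\pi}^{\pi}J_N=1$. It is a nonnegative, even trigonometric polynomial of degree $2r-2\le N$, so $p_f:=J_N*f\in\mathcal{P}_N$. Then
\[
|f(x)-p_f(x)|\le\frac1{2\pi}\int J_N(t)\,|f(x-t)-f(x)|\,dt .
\]
Using $\omega(f,|t|)\le(1+N|t|)\,\omega(f,1/N)$, this reduces to the moment estimate $\int|t|J_N(t)\,dt=O(1/N)$. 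That estimate follows from $\lambda_N\asymp r^{-3}$ together with the bounds $\sin(rt/2)/\sin(t/2)\le\min\{r,\pi/|t|\}$. Hence $E_N(f)\le C_0\,\omega(f,1/N)$.

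\emph{Step 2 (Favard-type lemma).} For a $C^1$ periodic $g$, I would show $E_N(g)\le\frac{C}{N}E_N(g')$.
\begin{itemize}
\item Write $g(x)=\bar g+\frac1{2\pi}\int_{-\pi}^{\pi}B(t)\,g'(x-t)\,dt$, where $B(t)=(\pi-t)/2$ on $(0,2\pi)$ is the Bernoulli kernel and $\bar g$ is the mean of $g$.
\item Let $q\in\mathcal{P}_N$ be a best approximant of $g'$ with its mean removed. Since $g'$ has zero mean, removing the mean at most doubles the error.
\item Replace $B$ by a trigonometric polynomial $b_N$ of degree $\le N$ with $\Vert B-b_N\Vert_{L^1}=O(1/N)$. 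This is possible because $B$ has bounded variation.
\item Then $\bar g+\frac1{2\pi}\int B(t)\,q(x-t)\,dt\in\mathcal{P}_N$, being an antiderivative of $q$ plus a constant. Moreover, $\frac1{2\pi}\int b_N(t)\,(g'-q)(x-t)\,dt$ is itself a trigonometric polynomial of degree $\le N$ and can be absorbed into the approximant.
\item What remains is the error term $\frac1{2\pi}\int (B-b_N)(t)\,(g'-q)(x-t)\,dt$, which is bounded by $O(1/N)\cdot 2E_N(g')$.
\end{itemize}

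\emph{Step 3 (iterate and conclude).} Applying Step 2 $K$ times and then Step 1 to $f^{(K)}$ gives
\[
E_N(f)\le\frac{C^K}{N^K}E_N(f^{(K)})\le\frac{C_K\,\omega(f^{(K)},1/N)}{N^K}.
\]
For the ``moreover'' part, the mean value theorem gives $\omega(f^{(K)},1/N)\le\Vert f^{(K+1)}\Vert_\infty/N$, which yields the second bound. The infimum defining $E_N$ is attained because $\mathcal{P}_N$ is finite-dimensional, so a concrete $p_f$ exists.

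The main obstacle is Step 2: producing the $L^1$ approximation of the discontinuous Bernoulli kernel at rate $1/N$. I would first try the Fejér means of $B$, checking directly that the jump at $0$ only contributes $O(1/N)$ in $L^1$. Failing that, I would apply Step 1's Jackson-kernel argument in the $L^1$ norm, using the fact that the $L^1$ modulus of continuity of a bounded-variation function is $O(\delta)$.
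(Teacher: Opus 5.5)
The paper gives no proof of this proposition; it quotes it from Lorentz, so there is no in-paper argument to compare routes with. Your proposal is a correct, self-contained reconstruction of the standard proof. It uses the Jackson kernel for $K=0$, then the Favard-type reduction $E_N(g)\le\frac{C}{N}E_N(g')$ via the Bernoulli kernel, then iteration. It also delivers both clauses with the \emph{same} $p_f$ (a best approximant) and the same constant $C_K=C_0C^K$, as the statement requires. The citation only buys brevity. Two corrections are worth making.

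First, a normalization slip. Since $\hat B(k)=1/(2ik)$ for $k\neq 0$, the representation is $g(x)=\bar g+\frac{1}{\pi}\int_{-\pi}^{\pi}B(t)\,g'(x-t)\,dt$, not $\frac{1}{2\pi}$. This only changes constants.

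Second, and more substantively, drop the Fej\'er-means idea for approximating $B$, because it does not give rate $1/N$.
\begin{itemize}
\item Take $\sigma_N B=\frac{1}{2\pi}\int_{-\pi}^{\pi}F_N(s)B(\cdot-s)\,ds$ with $\frac{1}{2\pi}\int_{-\pi}^{\pi}F_N=1$.
\item For $0<t<\pi$ and $|s|<\pi$ one has $B(t)-B(t-s)=-\frac{s}{2}+\pi\mathbf{1}_{\{s>t\}}$.
\item Since $F_N$ is even, this gives $(B-\sigma_N B)(t)=\frac12\int_t^{\pi}F_N(s)\,ds$, which is of order $1/(Nt)$ for $1/N\lesssim t\lesssim 1$.
\item Hence $\Vert B-\sigma_N B\Vert_{L^1}\asymp(\log N)/N$: the jump at $0$ costs exactly a logarithm, not $O(1/N)$.
\end{itemize}

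Your fallback is the right route and should be the primary one. Set $b_N:=J_N*B$. Minkowski's integral inequality together with $\Vert B(\cdot-s)-B\Vert_{L^1}\le|s|\,\mathrm{Var}(B)=2\pi|s|$ gives $\Vert B-b_N\Vert_{L^1}\le\int_{-\pi}^{\pi}|s|\,J_N(s)\,ds=O(1/N)$. This is exactly the first-moment bound you already established in Step 1. With that choice of $b_N$, Step 2 goes through as written and the whole argument is complete.
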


Given the existence of $p_f$, we can decompose the approximation error into two parts using the triangle inequality.
\begin{eqnarray}
    \inf_{h\in\mathcal{H}_g}\Vert f-h\Vert_\infty\le\Vert f-p_f\Vert_\infty+\inf_{h\in\mathcal{H}_g}\Vert p_f-h\Vert_\infty.
\end{eqnarray}
The first term on the right-hand side can be bounded using Jackson's inequality. As for the second term, it is either strictly equal to zero or unbounded.
\begin{eqnarray}
    \sup_{p\in\mathcal{P}_N}\inf_{h\in\mathcal{H}_g}\Vert p-h\Vert_\infty=
    \begin{cases}
        0,&\text{if }\mathcal{P}_N\subseteq\mathcal{H}_g,\\
        +\infty,&\text{otherwise}.
    \end{cases}
\end{eqnarray}
This can be understood by noting that $\mathcal{H}$ is a closed cone. By definition, $\mathcal{P}_N$ is a $(2N+1)$-dimensional $\mathbb{R}$-linear space, and $\Delta$ is a closed subset of $\mathcal{P}_L$. Since $g$ is a real polynomial of total degree $\le n$, $\pi_g(\Delta)$ is a closed subset of $\mathcal{P}_{nL}$. Then $\mathcal{H}_g$ is a closed cone in $\mathcal{P}_{nL}$. In order to prove Theorem \ref{thm:fixed}, we present a sufficient condition for the existence of $g$ such that $\mathcal{H}_g$ covers $\mathcal{P}_N$.

\begin{lemma}\label{thm:fixed_dimension}
    There exists a real polynomial $g\in\mathbb{R}[x_1,\cdots,x_m]$ of total degree $\le n$ such that $\mathcal{P}_{dL}\subseteq\mathcal{H}_g$, where $d=\min\{n,m-2\}$.
\end{lemma}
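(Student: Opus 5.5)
The plan is to exhibit one fixed polynomial $g$ of the form ``product of $d$ coordinates minus a constant''. Then, for any target $p\in\mathcal{P}_{dL}$ and any sufficiently large rescaling $\alpha$, I factor $p/\alpha$ plus that constant into $d$ nonnegative trigonometric polynomials of degree $\le L$. Concretely, set $\lambda:=(2d)^{-d}$ and
\begin{eqnarray}
g(x_1,\cdots,x_m)=x_1x_2\cdots x_d-\lambda\,(x_1+\cdots+x_m)^d .
\end{eqnarray}
This has total degree $d\le n$. Since $d\le m-2<m$, there is at least one spare mode, $d+1$, that can absorb the leftover probability mass. On $\Delta$ we have $\sum_j y_j=1$, so $\pi_g(\bm y)=y_1\cdots y_d-\lambda$. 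Hence it suffices to show that for every $p\in\mathcal{P}_{dL}$ there are $\alpha>0$ and $\bm y\in\Delta$ with $y_1\cdots y_d=\lambda+p/\alpha$; then $p=\alpha\,\pi_g(\bm y)\in\mathcal{H}_g$. The case $p\equiv 0$ is trivial with $\alpha=0$.

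\textbf{Step 1: realize $r$ as a squared modulus.} Fix $p\neq0$ and put $r_\alpha:=\lambda+p/\alpha$. For $\alpha\ge 2\Vert p\Vert_\infty/\lambda$, $r_\alpha$ is a real trigonometric polynomial of degree $\le dL$ taking values in $[\lambda/2,3\lambda/2]$. By the Fejér--Riesz theorem (already used in the proof of Theorem \ref{thm:multi-photon}), $r_\alpha=|q_\alpha|^2$ for some $q_\alpha\in\mathbb{C}[e^{ix}]$ of degree $\le dL$. I take the outer (minimum-phase) factor, so that all roots of $q_\alpha(z)$ lie in $|z|\ge1$; in fact they lie in $|z|>1$, since $r_\alpha>0$.

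\textbf{Step 2: split into $d$ factors.} Factor $q_\alpha(z)=c_\alpha\prod_i(1-z/\zeta_i)$ over its roots and distribute the (at most $dL$) linear factors into $d$ groups of at most $L$ each. Attach the constant $c_\alpha^{1/d}$ to each group. This gives $q_\alpha=q_{\alpha,1}\cdots q_{\alpha,d}$ with $\deg q_{\alpha,k}\le L$. Define $y_k:=|q_{\alpha,k}|^2\in\mathcal{P}_L$ for $k\le d$, and $y_{d+1}:=1-\sum_{k\le d}y_k$, with the remaining $y_j=0$. Then $y_k\ge0$, $\prod_{k\le d}y_k=r_\alpha$, and the sum condition holds. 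The only thing left to verify is $y_{d+1}\ge0$, i.e.\ $\sum_{k\le d}y_k\le1$.

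\textbf{Step 3 (main obstacle): control the individual factors.} A bound on the product $r_\alpha$ does not by itself bound the individual factors $y_k$. I would handle this by a continuity argument as $\alpha\to\infty$. The coefficients of $r_\alpha$ converge to those of the constant $\lambda$, and the outer Fejér--Riesz factor depends continuously on $r$ among strictly positive trigonometric polynomials (for example via the explicit formula $q=\exp(\text{analytic extension of }\tfrac12\log r)$). Hence $q_\alpha\to\sqrt{\lambda}$ uniformly on the circle. I would then show directly that every root $\zeta_i$ of $q_\alpha$ escapes to infinity, i.e.\ $\min_i|\zeta_i|\to\infty$. This follows from Hurwitz/Rouché applied on discs of growing radius $R$, using that $q_\alpha\to\sqrt\lambda$ uniformly on $|z|\le R$ since the coefficients converge. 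Consequently each $\prod_{i\in\text{group }k}(1-e^{ix}/\zeta_i)\to1$ uniformly, and $c_\alpha\to\sqrt\lambda$. Therefore $y_k\to\lambda^{1/d}=1/(2d)$ uniformly for each $k$. For $\alpha$ large enough this gives $\sum_{k\le d}y_k\le 1/2+o(1)<1$, so $y_{d+1}>0$ and $\bm y\in\Delta$.

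This finishes the inclusion $\mathcal{P}_{dL}\subseteq\mathcal{H}_g$. The root-escape and continuity statement is the step I would write out carefully. Everything else reduces to the Fejér--Riesz theorem and the identity $\sum_j y_j=1$.
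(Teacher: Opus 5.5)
\textbf{Verdict: your proof is correct, and it follows a genuinely different route from the paper's.}

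\textbf{The paper's route.} The paper argues locally. It picks a relatively interior point $\bm{y}$ of $\Delta$, with $y_1=\cdots=y_{m-1}=\frac1m+\epsilon\cos(Lx)$ and $y_m$ balancing. It then builds $g$ from antiderivatives of Chebyshev polynomials so that $\pi_g(\bm{y})=\bm{0}$ and $D\pi_g(\bm{y})$ maps the tangent space $T$ onto $\mathcal{P}_{dL}$. An open-mapping theorem places a ball around $\bm{0}$ inside $\pi_g(\Delta)$, and the rescaling $\alpha$ finishes.

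\textbf{Your route.} You take $g$ to be a single homogenized monomial, so that only one positive trigonometric polynomial $\lambda+p/\alpha$ has to be realized. You realize it explicitly by Fej\'er--Riesz factorization and root splitting, and one slack mode absorbs $1-\sum_{k\le d}y_k$. This is constructive and avoids both the derivative computation and the open-mapping theorem. The paper's argument is nonconstructive, but it is a more general template.

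\textbf{A stronger conclusion.} Your argument uses only one spare mode, so it actually proves $\mathcal{P}_{\min\{n,m-1\}L}\subseteq\mathcal{H}_g$. That shifts the threshold in Theorem~\ref{thm:fixed} from $m-2$ to $m-1$. This is still consistent with the paper's dimension bound $N\le(m-1)L+\lfloor\frac{m-1}{2}\rfloor$. It also recovers $\mathcal{P}_L$ for $m=2$, where the paper's $d$ degenerates to $0$. The paper loses a mode because its $g$ uses $d+1$ coordinates that must coincide at $\bm{y}$, in addition to the separate balancing mode $m$.

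\textbf{Step~3 can be made limit-free.} Your continuity and root-escape argument is sound, but it can be replaced by a direct bound. Since all $|\zeta_i|>1$, Jensen's formula gives $|c_\alpha|^2=\exp\bigl(\frac{1}{2\pi}\int_0^{2\pi}\log r_\alpha\,dx\bigr)\le\frac{3\lambda}{2}$. Also each $|1-e^{ix}/\zeta_i|^2\le4$. Hence $\sum_{k\le d}y_k\le d\,4^L(3\lambda/2)^{1/d}$. Choosing $\lambda$ small in terms of $d$ and $L$ makes this less than $1$ for every admissible $\alpha$. Letting $g$ depend on $L$ this way is fine, since the architecture is fixed.
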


\begin{proof}
    We prove the lemma in two steps. First, we construct a relatively interior point $\bm{y}$ of $\Delta$ and a real polynomial $g$ of degree $\le n$ satisfying
    \begin{eqnarray}
        &&\pi_g(\bm{y})=\bm{0},\\
        &&D\pi_g(\bm{y})(T)=\mathcal{P}_{dL},
    \end{eqnarray}
    where $\bm{0}$ is the zero function, $D\pi_g(\bm{y})$ is the Frechet derivative of $\pi_g$ at $\bm{y}$, and
    \begin{eqnarray}
        T:=\{(v_1,\cdots,v_m)\in\mathcal{P}_L^m:\sum_{j=1}^mv_j=\bm{0}\}.
    \end{eqnarray}
    Then, we utilize the properties satisfied by $g$ and $\bm{y}$ to prove that $\mathcal{P}_{dL}\subseteq\mathcal{H}_g$.

    Construct $\bm{y}=(y_1,\cdots,y_m)$ as
    \begin{eqnarray}
        &&y_1(x)=\cdots=y_{m-1}(x)=\frac{1}{m}+\epsilon\cos(Lx),\\
        &&y_m(x)=\frac{1}{m}-\epsilon(m-1)\cos(Lx).
    \end{eqnarray}
    It can be seen that $\bm{y}\in\mathcal{P}_L^m$ by noting that $\cos(Lx)=\frac{1}{2}(e^{iLx}+e^{-iLx})$. By choosing a sufficiently small $\epsilon$, we can ensure that $\bm{y}$ is a relatively interior point of $\Delta$. Construct $g$ as
    \begin{eqnarray}
        g(x_1,\cdots,x_m)=\sum_{j=1}^d(q_j(z_j)-q_j(z_{d+1})),
    \end{eqnarray}
    where $z_j=\frac{1}{\epsilon}(x_j-\frac{1}{m})$ for $j\in[d+1]$. And $q_j\in\mathbb{R}[x]$ is an antiderivative of the Chebyshev polynomial of degree $j-1$, i.e.
    \begin{eqnarray}
        q_j^\prime(\cos x)=\cos((j-1)x).
    \end{eqnarray}
    Since $d=\min\{n,m-2\}$, $g$ is a real polynomial of total degree $\le n$. For brevity, denote
    \begin{eqnarray}
        g_j:=\frac{\partial}{\partial x_j}g.
    \end{eqnarray}
    Differentiating $\pi_g(\bm{y})(x)$ with respect to $x$ yields
    \begin{eqnarray}
        \frac{d}{dx}\pi_g(\bm{y})(x)&&=\sum_{j=1}^m\pi_{g_j}(\bm{y})\frac{d}{dx}y_j\\
        &&=\sum_{j=1}^{d+1}\pi_{g_j}(\bm{y})\frac{d}{dx}y_j\\
        &&=-L\sin(Lx)\left(\sum_{j=1}^{d+1}\pi_{g_j}(\bm{y})\right)\\
        &&=0.
    \end{eqnarray}
    This indicates that $\pi_g(\bm{y})$ is a constant function. By adjusting the constant term of $q_j$, we can make $\pi_g(\bm{y})=\bm{0}$. By the definition of $T$, it follows that
    \begin{eqnarray}
        D\pi_g(\bm{y})(T)=&&\sum_{j=1}^{m-1}(\pi_{g_j}(\bm{y})-\pi_{g_m}(\bm{y}))\mathcal{P}_L\\
        =&&\sum_{j=1}^{d+1}\pi_{g_j}(\bm{y})\mathcal{P}_L\\
        =&&\sum_{j=1}^dq_j^\prime(\cos(Lx))\mathcal{P}_L+\\\nonumber
        &&\left(-\sum_{j=1}^dq_j^\prime(\cos(Lx))\right)\mathcal{P}_L\\
        =&&\sum_{j=1}^dq_j^\prime(\cos(Lx))\mathcal{P}_L\\
        =&&\sum_{j=1}^d\cos((j-1)Lx)\mathcal{P}_L\\
        =&&\mathcal{P}_{dL}.
    \end{eqnarray}
    Thus, the construction of $g$ and $\bm{y}$ satisfies the conditions.

    Given the construction of $g$ and $\bm{y}$, we can prove that $\mathcal{P}_{dL}\subseteq\mathcal{H}_g$. Since $\bm{y}$ is a relatively interior point of $\Delta$, $T$ is the tangent space at $\bm{y}$ to $\Delta$. By the definition of the interior point, there exists $\delta>0$ such that the open ball in $T$
    \begin{eqnarray}
        B_T(\bm{y},\delta):=\{\bm{z}\in T:\Vert\bm{z}-\bm{y}\Vert_\infty<\delta\}
    \end{eqnarray}
    is contained in $\Delta$. Define a map
    \begin{eqnarray}
        F:B_T(\bm{0},\delta)\to\mathcal{P}_{dL},F(\bm{v})=\pi_g(\bm{y}+\bm{v}).
    \end{eqnarray}
    It follows that $F(\bm{0})=\pi_g(\bm{y})=\bm{0}$ and $DF(\bm{0})=D\pi_g(\bm{y})|_T$ is a surjection onto $\mathcal{P}_{dL}$. According to Theorem 1 in \cite{Sussmann2003}, $F$ is an open at $\bm{0}$. That is, there exists $\rho>0$ such that
    \begin{eqnarray}
        B_{\mathcal{P}_{dL}}(\bm{0},\rho)\subseteq F(B_T(\bm{0},\delta))\subseteq\pi_g(\Delta).
    \end{eqnarray}
    For every $p\in\mathcal{P}_{dL}$, there exists $\alpha>0$ such that $\Vert\alpha p\Vert_\infty<\rho$, leading to $\alpha p\in\pi_g(\Delta)$. It follows that $\mathcal{P}_{dL}\subseteq\mathcal{H}_g$.
\end{proof}

In fact, by comparing dimensions, we can also obtain a necessary condition for the existence of $g$ such that $\mathcal{H}_g$ covers $\mathcal{P}_N$. A straightforward condition is that $N\le nL$, since $\mathcal{H}_g\subseteq\mathcal{P}_{nL}$. Define a map
\begin{eqnarray}
    G:\mathbb{R}\times\mathcal{P}_L^m\to\mathcal{P}_{nL},G(\alpha,\bm{y})=\alpha\pi_g(\bm{y}).
\end{eqnarray}
By definition, $\mathcal{H}_g=G(\mathbb{R}\times\Delta)$. Note that $G$ is a smooth map from $\mathbb{R}\times\mathcal{P}_L^m$ to $\mathcal{P}_{nL}$, and that $\mathbb{R}\times\Delta$ is an $((m-1)(2L+1)+1)$-dimensional manifold with boundary. According to the corollary of Sard's theorem \cite{Lee2012}, the dimension of $\mathcal{H}$ is at most $(m-1)(2L+1)+1$. Recalling that the dimension of $\mathcal{P}_N$ is $2N+1$, a necessary condition for $\mathcal{P}_N\subseteq\mathcal{H}_g$ is
\begin{eqnarray}
    (m-1)(2L+1)+1\ge 2N+1.
\end{eqnarray}
Solving the inequality and noting that $N$ is an integer, we obtain
\begin{eqnarray}
    N\le(m-1)L+\left\lfloor\frac{m-1}{2}\right\rfloor.
\end{eqnarray}
Combining with $N\le nL$, we give an upper bound on $N$ for the existence of $g$ such that $\mathcal{P}_N\subseteq\mathcal{H}_g$,
\begin{eqnarray}
    N\le\min\left\{nL,(m-1)L+\left\lfloor\frac{m-1}{2}\right\rfloor\right\},
\end{eqnarray}
which is asymptotically the same as $\min\{n,m-2\}L$. Although not used in the proof of the error bound, this result implies limitations on the expressivity of MPQNNs. Now we move on to prove Theorem \ref{thm:fixed}, which we restate for completeness.

\begin{restate}
    There exists a real polynomial $g\in\mathbb{R}[x_1,\cdots,x_m]$ of total degree $\le n$ and a constant $C_K>0$ for $K\in\mathbb{N}^+$ such that if $f\in C_{2\pi}(\mathbb{R})$ is a $K$-times continuously differentiable $2\pi$-periodic function, then
    \begin{eqnarray}\label{eq:error_bound}
        \inf_{h\in\mathcal{H}_g}\Vert f-h\Vert_\infty\le\frac{C_K\Vert f^{(K)}\Vert_\infty}{(dL)^K},
    \end{eqnarray}
    where $d=\min\{n,m-2\}$.
\end{restate}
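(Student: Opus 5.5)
The plan is to combine Lemma \ref{thm:fixed_dimension} with Jackson's inequality (Proposition \ref{thm:jackson}) through the triangle-inequality decomposition written just before the lemma. First fix $g$ to be the polynomial supplied by Lemma \ref{thm:fixed_dimension}. It has total degree $\le n$, and it satisfies $\mathcal{P}_{dL}\subseteq\mathcal{H}_g$ with $d=\min\{n,m-2\}$. This choice of $g$ does not depend on $f$ or $K$, so the single $g$ asserted in the theorem works uniformly. The constant $C_K$ will be the one coming from Jackson's inequality, so it too depends only on $K$.

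Next, given a $K$-times continuously differentiable $2\pi$-periodic $f$, set $N=dL$. When $d\ge1$, i.e.\ $n\ge1$ and $m\ge3$, we have $N\in\mathbb{N}^+$. I would apply Proposition \ref{thm:jackson} with index $K-1$ rather than $K$, using its ``moreover'' clause. Since $f$ is $((K-1)+1)$-times continuously differentiable and $K\ge1$ means $K-1\in\mathbb{N}$, this yields $p_f\in\mathcal{P}_{dL}$ with $\Vert f-p_f\Vert_\infty\le C_{K-1}\Vert f^{(K)}\Vert_\infty/(dL)^K$. Alternatively, one could use the modulus-of-continuity form at index $K$ together with the bound $\omega(f^{(K)},1/N)\le 2\Vert f^{(K)}\Vert_\infty$. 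Either route gives the displayed rate after renaming the constant as $C_K$.

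Then I would use the decomposition
\begin{eqnarray}
\inf_{h\in\mathcal{H}_g}\Vert f-h\Vert_\infty\le\Vert f-p_f\Vert_\infty+\inf_{h\in\mathcal{H}_g}\Vert p_f-h\Vert_\infty .
\end{eqnarray}
Since $p_f\in\mathcal{P}_{dL}\subseteq\mathcal{H}_g$, we may take $h=p_f$, and the second term vanishes. The first term is exactly the Jackson bound, which concludes the proof.

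There is no serious obstacle, because Lemma \ref{thm:fixed_dimension} carries all the weight. The only points needing care are bookkeeping. One is the index shift in Jackson's inequality, which is needed to obtain $\Vert f^{(K)}\Vert_\infty$ rather than a modulus of continuity. The other is the degenerate regime $d\le 0$, where $m\le2$ or $n=0$. In that regime the right-hand side of the bound is undefined or infinite, so the statement should be read for $m\ge3$ and $n\ge1$, and I would state that assumption explicitly.
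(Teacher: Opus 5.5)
Your proposal is correct and is essentially the paper's proof: fix the $g$ from Lemma~\ref{thm:fixed_dimension}, take the Jackson approximant $p_f\in\mathcal{P}_{dL}\subseteq\mathcal{H}_g$, and observe that the second term of the triangle-inequality decomposition vanishes. Your extra bookkeeping makes explicit what the paper leaves implicit: you use the ``moreover'' clause of Proposition~\ref{thm:jackson} at index $K-1$ (or $\omega(f^{(K)},1/N)\le 2\Vert f^{(K)}\Vert_\infty$) to obtain exactly $\Vert f^{(K)}\Vert_\infty$, and you restrict to $d\ge1$, to which $L\ge1$ should also be added so that $dL\in\mathbb{N}^+$.
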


\begin{proof}
    According to Lemma \ref{thm:fixed_dimension}, There exists a real polynomial $g\in\mathbb{R}[x_1,\cdots,x_m]$ of total degree $\le n$ such that $\mathcal{P}_{dL}\subseteq\mathcal{H}_g$. And according to Proposition \ref{thm:jackson}, there exists a constant $C_K>0$ and $p_f\in\mathcal{P}_{dL}$ such that
    \begin{eqnarray}
        \Vert f-p_f\Vert_\infty\le\frac{C_K\Vert f^{(K)}\Vert_\infty}{(dL)^K}.
    \end{eqnarray}
    Using the triangle inequality, the approximation error can be decomposed as
    \begin{eqnarray}
        \inf_{h\in\mathcal{H}_g}\Vert f-h\Vert_\infty\le\Vert f-p_f\Vert_\infty+\inf_{h\in\mathcal{H}_g}\Vert p_f-h\Vert_\infty.
    \end{eqnarray}
    Since $\mathcal{P}_{dL}\subseteq\mathcal{H}_g$, the second term of the right-hand side of the inequality equals to zero. Then,
    \begin{eqnarray}
        \inf_{h\in\mathcal{H}_g}\Vert f-h\Vert_\infty&&\le\Vert f-p_f\Vert_\infty\\
        &&\le\frac{C_K\Vert f^{(K)}\Vert_\infty}{(dL)^K}.
    \end{eqnarray}
    Thus, the approximation error can be upper bounded.
\end{proof}

\section{Proof of Theorem 3}
\label{apx:proof3}

In this appendix, we provide a proof of Theorem \ref{thm:trainable}, which gives an upper bound on the approximation error of an MPQNN with a trainable observable. Our approach is the same as in Appendix \ref{apx:proof2}: we prove that $\mathcal{P}_N\subseteq\mathcal{H}$ for some $N$, and then use Jackson's inequality to obtain an error bound. We prove a different error bound from that in Theorem \ref{thm:fixed}, and combine them to obtain the error bound in Theorem \ref{thm:trainable}. First, we present a lemma describing an upper bound on $N$ for $\mathcal{P}_N\subseteq\mathcal{H}$.

\begin{lemma}\label{thm:trainable_dimension}
    $\mathcal{P}_{nd}\subseteq\mathcal{H}$, where $d=\min\{L,\lfloor\frac{m-1}2\rfloor\}$.
\end{lemma}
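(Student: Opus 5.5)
The idea is to fix a single point $\bm{y}\in\Delta$ whose coordinates already span the real trigonometric polynomials of degree $\le d$. Then I will use the freedom in the trainable observable, that is, the arbitrary $g$ of total degree $\le n$, to reach every element of $\mathcal{P}_{nd}$ as a polynomial in these coordinates. Unlike Lemma \ref{thm:fixed_dimension}, no open-mapping or rescaling argument should be needed, because $g$ ranges over a linear space and $\pi_g(\bm{y})$ is linear in $g$ for fixed $\bm{y}$.

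\emph{Construction of $\bm{y}$.} Put $c=\frac{1}{2d+1}$. For $k=1,\dots,d$ define
\begin{eqnarray}
y_{2k-1}(x)=c\,(1+\cos(kx)),\qquad y_{2k}(x)=c\,(1+\sin(kx)).
\end{eqnarray}
Set $y_{2d+1}(x)=1-2dc=c$, and set $y_j=0$ for $2d+1<j\le m$. This uses $2d+1\le m$ modes, which is allowed because $d\le\lfloor\frac{m-1}{2}\rfloor$. Each $y_j$ has degree $\le d\le L$, so $y_j\in\mathcal{P}_L$. Each $y_j$ is nonnegative, and the $y_j$ sum to $1$. Hence $\bm{y}\in\Delta$.

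\emph{Affine span and products.} Affine combinations of the $y_j$ produce $1$, $\cos(kx)$ and $\sin(kx)$ for every $k\le d$. For instance, $\cos(kx)=y_{2k-1}/c-y_{2d+1}/c$, and the constant is $1=\sum_j y_j$. So every element of $\mathcal{P}_d$ equals $\ell(y_1,\dots,y_m)$ for some real polynomial $\ell$ of degree $\le 1$.

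Next I show that $\mathcal{P}_{nd}$ is contained in the real span of products of at most $n$ elements of $\mathcal{P}_d$. I argue by induction on $n$. Any $\omega$ with $|\omega|\le nd$ can be written as $\omega=\omega_1+\omega_2$ with $|\omega_1|\le (n-1)d$ and $|\omega_2|\le d$. The identities $\cos(a+b)=\cos a\cos b-\sin a\sin b$ and $\sin(a+b)=\sin a\cos b+\cos a\sin b$ then express $\cos(\omega x)$ and $\sin(\omega x)$ as real combinations of products of a function in $\mathcal{P}_{(n-1)d}$ with one in $\mathcal{P}_d$.

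Now let $p\in\mathcal{P}_{nd}$. Write $p=\sum_r\lambda_r\prod_{s\le n}\ell_{r,s}(\bm{y})$ with each $\ell_{r,s}$ affine, and set
\begin{eqnarray}
g=\sum_r\lambda_r\prod_{s}\ell_{r,s}.
\end{eqnarray}
This $g$ is a real polynomial of total degree $\le n$ with $\pi_g(\bm{y})=p$. Therefore $p\in\mathcal{H}$ by Theorem \ref{thm:multi-photon}.

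The only delicate step is the mode count. Each frequency $k\le d$ needs two modes, one for its cosine and one for its sine, and one further mode is needed to absorb the normalization constraint while keeping every $y_j\ge 0$. This is exactly where the bound $d\le\lfloor\frac{m-1}{2}\rfloor$ comes from. The bound $d\le L$ comes only from the requirement $y_j\in\mathcal{P}_L$.
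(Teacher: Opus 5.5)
Your strategy is the paper's: pick one $\bm y\in\Delta$ whose affine span contains $1,\cos(kx),\sin(kx)$ for $k\le d$, then write any $p\in\mathcal{P}_{nd}$ as a polynomial of degree $\le n$ in these functions. Your induction via the addition formulas proves a step the paper only asserts. The problem is that the $\bm y$ you write down is not in $\Delta$. You set $y_{2d+1}=1-2dc$ by subtracting only the constant parts of $y_1,\dots,y_{2d}$, so in fact
\[
\sum_{j=1}^m y_j(x)=1+c\sum_{k=1}^d\bigl(\cos(kx)+\sin(kx)\bigr),
\]
which equals $1+\frac{d}{2d+1}\neq 1$ at $x=0$. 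Redefining $y_{2d+1}:=1-\sum_{j\le 2d}y_j$ with the same $c=\frac{1}{2d+1}$ does not rescue it. That choice gives $y_{2d+1}=c\bigl(1-\sum_{k=1}^d(\cos(kx)+\sin(kx))\bigr)$, which is negative at $x=0$ for $d\ge 2$ and at $x=\pi/4$ for $d=1$. So the claim ``the $y_j$ sum to $1$, hence $\bm y\in\Delta$'' is false. It fails exactly where your last paragraph says the difficulty lies: the extra mode must absorb the oscillating parts of the other coordinates while staying nonnegative.

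The repair is to shrink $c$. For $d\ge 1$ (the case $d=0$ is trivial), take
\[
y_{2d+1}=1-\sum_{j\le 2d}y_j=1-2dc-c\sum_{k=1}^d\bigl(\cos(kx)+\sin(kx)\bigr)
\]
with $c\le\frac{1}{(2+\sqrt2)d}$, e.g.\ $c=\frac{1}{4d}$. Since $|\cos(kx)+\sin(kx)|\le\sqrt2$, this $y_{2d+1}$ is nonnegative, and it lies in $\mathcal{P}_d\subseteq\mathcal{P}_L$. Then use $\cos(kx)=y_{2k-1}/c-1$ and $\sin(kx)=y_{2k}/c-1$ instead of your formula with $y_{2d+1}/c$, and the rest of your argument goes through unchanged.

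The paper avoids this bookkeeping. It encloses the compact curve $x\mapsto(\cos x,\sin x,\dots,\cos(dx),\sin(dx))$ in a $2d$-simplex in $\mathbb{R}^{2d}$ and takes the $y_j$ to be barycentric coordinates. These are automatically nonnegative, sum to $1$, and are affine in the curve. Your corrected $\bm y$ is that construction for the simplex with vertices $e_j/c-\mathbf{1}$ ($j\le 2d$) and $-\mathbf{1}$, where $e_j$ are the standard basis vectors and $\mathbf{1}$ is the all-ones vector.
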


\begin{proof}
    For $k\in\mathbb{N}^+$, noting that $\cos(kx)=\frac{1}{2}(e^{ikx}+e^{-ikx})$ and $\sin(kx)=\frac{i}{2}(e^{ikx}-e^{-ikx})$, we have $\cos(kx),\sin(kx)\in\mathcal{P}_k$. Let $\bm{z}\in\mathcal{P}_d^{2d}$ be
    \begin{eqnarray}
        \bm{z}(x)=(\cos{x},\sin{x},\cdots,\cos{dx},\sin{dx}).
    \end{eqnarray}
    The orbit $\{\bm{z}(x):x\in\mathbb{R}\}$ is a compact set in $\mathbb{R}^{2d}$. There exists a $2d$-dimensional convex set
    \begin{eqnarray}
        C=\rm{conv}\{\bm{v_1},\cdots,\bm{v}_{2d+1}\}\subseteq\mathbb{R}^{2d}
    \end{eqnarray}
    that contains the orbit. For every $x\in\mathbb{R}$, $\bm{z}(x)$ can be represented as a convex combination of $\bm{v}_1,\cdots,\bm{v}_{2d+1}$. Denote
    \begin{eqnarray}
        \bm{z}(x)=\sum_{j=1}^{2d+1}\lambda_j(x)\bm{v}_j.
    \end{eqnarray}
    Now we construct $\bm{y}\in\mathcal{P}_L^m$ as
    \begin{eqnarray}
        y_j(x)=\begin{cases}
            \lambda_j(x),&j=1,\cdots,2d+1,\\
            0,&j=2d+2,\cdots,m.
        \end{cases}
    \end{eqnarray}
    By convexity, it follows that $\bm{y}\in\Delta$ and
    \begin{eqnarray}
        \forall x\in\mathbb{R},\bm{z}(x)=\sum_{j=1}^my_j(x)\bm{v}_j.
    \end{eqnarray}
    For any $p\in\mathcal{P}_{nd}$, there exists a real polynomial $q\in\mathbb{R}[x_1,\cdots,x_{2d}]$ of total degree $\le n$ such that
    \begin{eqnarray}
        p=\pi_q(z_1,\cdots,z_{2d}).
    \end{eqnarray}
    Substitution gives
    \begin{eqnarray}
        p=g(y_1,\cdots,y_m),
    \end{eqnarray}
    where $g\in\mathbb{R}[x_1,\cdots,x_m]$ is another real polynomial of total degree $\le n$. Thus, we have $\mathcal{P}_{nd}\subseteq\mathcal{H}$.
\end{proof}

Using Jackson's inequality, we can now readily prove Theorem \ref{thm:trainable}, which we restate for completeness.

\begin{restate}
    For $K\in\mathbb{N}^+$, there exists a constant $C_K>0$ such that if $f\in C_{2\pi}(\mathbb{R})$ is a $K$-times continuously differentiable $2\pi$-periodic function, then
    \begin{eqnarray}
        \inf_{h\in\mathcal{H}}\Vert f-h\Vert_{\infty}\le\frac{C_K\Vert f^{(K)}\Vert_\infty}{d^K},
    \end{eqnarray}
    where $d=\min\{nL,\max\{(m-2)L,n\lfloor\frac{m-1}2\rfloor\}\}$.
\end{restate}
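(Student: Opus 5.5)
The plan is to show that $\mathcal{P}_N\subseteq\mathcal{H}$ for $N=d=\min\{nL,\max\{(m-2)L,n\lfloor\frac{m-1}2\rfloor\}\}$, and then apply Jackson's inequality (Proposition \ref{thm:jackson}) exactly as in the proof of Theorem \ref{thm:fixed}. Two inclusions are already available. Lemma \ref{thm:fixed_dimension} gives a $g_0$ with $\mathcal{P}_{\min\{n,m-2\}L}\subseteq\mathcal{H}_{g_0}$. Since $\mathcal{H}$ is closed under real rescaling (the scalar can be absorbed into $g$), we get $\mathcal{H}_{g_0}\subseteq\mathcal{H}$, and hence $\mathcal{P}_{\min\{nL,(m-2)L\}}\subseteq\mathcal{H}$. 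Lemma \ref{thm:trainable_dimension} gives $\mathcal{P}_{n\min\{L,\lfloor\frac{m-1}2\rfloor\}}\subseteq\mathcal{H}$.

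Write $k=\lfloor\frac{m-1}2\rfloor$. Since $\min\{nL,nk\}=\min\{nL,n\lfloor\frac{m-1}2\rfloor\}$, the two inclusions together give $\mathcal{P}_N\subseteq\mathcal{H}$ with
\[
N=\max\bigl\{\min\{nL,(m-2)L\},\,\min\{nL,nk\}\bigr\}=\min\bigl\{nL,\max\{(m-2)L,nk\}\bigr\}.
\]
The last equality is the elementary lattice identity $\max\{\min\{a,b\},\min\{a,c\}\}=\min\{a,\max\{b,c\}\}$, which I will check by cases on the order of $a,b,c$. Because the spaces $\mathcal{P}_N$ are nested, $\mathcal{P}_d$ is contained in whichever of the two inclusions attains the maximum. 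This is the one step that needs care. One small point: when $d=0$ (for instance $m=2$), the bound is vacuous or not meaningful. For $m\ge3$ and $n,L\ge1$ we have $d\ge1$, so I will either note that $d\ge1$ in the relevant regime or treat the degenerate case separately.

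Finally, let $f$ be $K$-times continuously differentiable. Proposition \ref{thm:jackson} with $N=d$ gives $p_f\in\mathcal{P}_d$ with $\Vert f-p_f\Vert_\infty\le C_K\Vert f^{(K)}\Vert_\infty/d^K$. This follows from the stated "moreover" clause applied with index $K-1$ (valid since $K\ge1$), or equivalently from the modulus-of-continuity form together with $\omega(f^{(K-1)},1/d)\le\Vert f^{(K)}\Vert_\infty/d$, after renaming constants. Since $p_f\in\mathcal{P}_d\subseteq\mathcal{H}$, we have $\inf_{h\in\mathcal{H}}\Vert f-h\Vert_\infty\le\Vert f-p_f\Vert_\infty$, which is the claimed bound.

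I do not expect a genuine obstacle, since all the work is in the two lemmas. The only places needing attention are the min/max identity and confirming that $\mathcal{H}_{g_0}\subseteq\mathcal{H}$ including the rescaling $\alpha$: $\alpha\pi_{g_0}=\pi_{\alpha g_0}$ and $\deg(\alpha g_0)\le n$.
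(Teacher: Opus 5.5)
Your proposal is correct and follows the paper's proof: both obtain $\mathcal{P}_{\min\{nL,n\lfloor\frac{m-1}2\rfloor\}}\subseteq\mathcal{H}$ from Lemma \ref{thm:trainable_dimension} and the degree-$\min\{nL,(m-2)L\}$ guarantee from Lemma \ref{thm:fixed_dimension}/Theorem \ref{thm:fixed} via $\mathcal{H}_g\subseteq\mathcal{H}$, and then use the larger of the two degrees in Jackson's inequality. The only difference is cosmetic: you merge the inclusions and invoke Jackson once, whereas the paper applies it twice and sets $C_K=\max\{C_{K,1},C_{K,2}\}$. Your extra care with the $K-1$ indexing of Proposition \ref{thm:jackson}, the rescaling $\alpha\pi_{g_0}=\pi_{\alpha g_0}$, and the degenerate case $d=0$ tightens the paper's somewhat loose write-up.
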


\begin{proof}
    Let $d_1=\min\{nL,n\lfloor\frac{m-1}2\rfloor\}$. According to Proposition \ref{thm:jackson}, there exists a constant $C_{K,1}>0$ and $p_f\in\mathcal{P}_{d_1}$ such that
    \begin{eqnarray}
        \Vert f-p_f\Vert_\infty\le\frac{C_{K,1}\Vert f^{(K)}\Vert_\infty}{d_1^K}.
    \end{eqnarray}
    According to Lemma \ref{thm:trainable_dimension}, $\mathcal{P}_{d_1}\subseteq\mathcal{H}$. Using the decomposition of approximation error in Appendix \ref{apx:proof2}, we can obtain
    \begin{eqnarray}
        \inf_{h\in\mathcal{H}}\Vert f-h\Vert_{\infty}\le\frac{C_{K,1}\Vert f^{(K)}\Vert_\infty}{d_1^K}.
    \end{eqnarray}
    According to Theorem \ref{thm:fixed}, noting that $\mathcal{H}_g\subseteq\mathcal{H}$ for all $g$, there exists a constant $C_{K,2}>0$ such that
    \begin{eqnarray}
        \inf_{h\in\mathcal{H}}\Vert f-h\Vert_{\infty}\le\frac{C_{K,2}\Vert f^{(K)}\Vert_\infty}{d_2^K},
    \end{eqnarray}
    where $d_2=\min\{nL,(m-2)L\}$. Let $C_K=\max\{C_{K,1},C_{K,2}\}$ and
    \begin{eqnarray}
        d&&=\max\{d_1,d_2\}\\
        &&=\min\{nL,\max\{(m-2)L,n\lfloor\frac{m-1}2\rfloor\}\}.
    \end{eqnarray}
    Then we can obtain
    \begin{eqnarray}
        \inf_{h\in\mathcal{H}}\Vert f-h\Vert_{\infty}\le\frac{C_K\Vert f^{(K)}\Vert_\infty}{d^K},
    \end{eqnarray}
    which completes the proof.
\end{proof}

\section{Simulation Algorithm}
\label{apx:algorithm}

We introduce the algorithm used in our numerical simulations and present the results of the simulations. To describe the simulation algorithm in detail, we first recall the connection between the amplitude of Fock states and the permanent of matrix. We compute the output of an MPQNN by explicitly compute the final state. Suppose that the input photon number partition is $\bm{s}\in\Phi_{m,n}$, and the initial state $\lvert\bm{s}\rangle$ undergoes a unitary transformation $\hat{U}$ induced by a unitary matrix $U$. The amplitude of a partition $\bm{t}\in\Phi_{m,n}$ is given by
\begin{eqnarray}
    \langle\bm{t}\rvert\hat{U}\lvert\bm{s}\rangle=\frac{\mathrm{perm}(U_{\bm{s},\bm{t}})}{\sqrt{s_1!\cdots s_m!t_1!\cdots t_m!}}.
\end{eqnarray}
In the above equation, $U_{\bm{s},\bm{t}}$ is an $n\times n$ matrix composed of multiple concatenated submatrices, where each of the $s_i\times t_j$ submatrices is obtained by repeating $u_{ij}$ for $s_it_j$ times. For example, when $n=4$ and $m=2$,
\begin{eqnarray}
    U_{(2,2),(1,3)}=
    \begin{pmatrix}
        u_{11}&u_{12}&u_{12}&u_{12}\\
        u_{11}&u_{12}&u_{12}&u_{12}\\
        u_{21}&u_{22}&u_{22}&u_{22}\\
        u_{21}&u_{22}&u_{22}&u_{22}\\
    \end{pmatrix}.
\end{eqnarray}
And $\mathrm{perm}(A)$ is the permanent of the matrix $A$, which is defined as
\begin{eqnarray}
    \mathrm{perm}(A)=\sum_{\sigma\in S_n}\prod_{i=1}^na_{i,\sigma(i)}.
\end{eqnarray}
Calculating the permanent of $U_{\bm{s},\bm{t}}$ for a Haar random matrix $U$ is conjectured to be an extremely difficult task for any classical computer. This conjecture forms the basis of certain claims of quantum advantage \cite{Aaronson2011,Harrow2017}.

The most time-consuming part of the classical simulation of an MPQNN is the computation of the probability distribution of the output photon partitions. To compute the entire probability distribution of the output photon partitions, we need to calculate the permanent of $U_{\bm{s},\bm{t}}$ for not a single partition $\bm{t}$, but all $n+m-1\choose m-1$ possible partitions. Thus, we designed a dynamic programming algorithm to calculate the amplitude, thereby accelerating the simulation.

If no photon enters the linear optical network (i.e. $n=0$), the amplitude $\langle\bm{t}\rvert\hat{U}\lvert\bm{s}\rangle$ is obviously $1$. Now we consider the case $n>0$, in which at least one of $s_1,\cdots,s_m$ is greater than $0$. Let $i$ be the smallest index that makes $s_i>0$. Utilizing the row expansion of the permanent, we can obtain the recursive formula of the amplitude as
\begin{eqnarray}
    \langle\bm{t}\rvert\hat{U}\lvert\bm{s}\rangle=\sum_{t_j>0}u_{ij}\sqrt\frac{t_j}{s_i}\langle\bm{t}_j^-\rvert\hat{U}\lvert\bm{s}_i^-\rangle,
\end{eqnarray}
where $\bm{s}_i^-=(s_1,\cdots,s_i-1,\cdots,s_m)\in\Phi_{m,n-1}$. We can calculate the amplitude $\langle\bm{t}\rvert\hat{U}\lvert\bm{s}\rangle$ for the partition $\bm{t}\in\Phi_{m,n}$ by calculating the amplitudes $\langle\bm{t}_j^-\rvert\hat{U}\lvert\bm{s}_i^-\rangle$ for at most $m$ partitions $\bm{t}_j^-\in\Phi_{m,n-1}$. 

In the algorithm, for each $\bm{t}^\prime$ satisfying $t_j^\prime\le t_j$ for $j\in[m]$, there exists exactly one $\bm{s}^\prime$ with which the function $\mathrm{AMP}(\bm{s}^\prime,\bm{t}^\prime)$ is called. And $\mathrm{AMP}(\bm{s}^\prime,\bm{t}^\prime)$ is called at most once to calculate the amplitude $\langle\bm{t}^\prime\rvert\hat{U}\lvert\bm{s}^\prime\rangle$. The number of all such $\bm{t}^\prime$ is
\begin{eqnarray}
    \prod_{j=1}^m(t_j+1)\le\left(\frac{n+m}m\right)^m.
\end{eqnarray}
The complexity of the algorithm is $O\left(m\left(\frac{n+m}m\right)^m\right)$.

\begin{algorithm}
\caption{calculation of amplitudes}
\SetAlgoLined
\KwIn{$n,m,\bm{s},\bm{t},U$}
\KwOut{$\langle\bm{t}\rvert\hat{U}\lvert\bm{s}\rangle$}
\SetKwFunction{Amp}{AMP}
\SetKwProg{Fn}{Function}{:}{}
\Fn{\Amp{$s,t$}}{
    \If{$amp[s,t]$ is defined}{
        \Return $amp[s,t]$\;
    }
    \eIf{$\sum_{i=1}^mt_i=0$}{
        $amp[s,t]\gets1$\;
        \Return $amp[s,t]$\;
    }{
        find the smallest index $i$ that makes $s_i>0$\;
        calculate $s_i^-$\;
        $sum\gets0$\;
        \For{$j\gets1$ to $m$}{
            \If{$t_j>0$}{
                calculate $t_j^-$\;
                $sum\gets sum+u_{ij}\sqrt\frac{t_j}{s_i}$\Amp{$s_i^-,t_j^-$}\;
            }
        }
        \Return $sum$
    }
}
\end{algorithm}

\bibliography{reference}

\end{document}